\newtheorem{theorem}{Theorem}
      \theoremstyle{plain}
      \newtheorem{assumption}{Assumption}
 \theoremstyle{remark}
\newtheorem*{remark}{Remark}
\begin{document}
\title{Statistical inference for sketching algorithms}
\author{Ryan P.\ Browne\thanks{Department of Statistics \& Actuarial Science, University of Waterloo, Waterloo, Ontario, N1E 2V1, Canada. Email: rpbrowne@uwaterloo.ca}  \ and Jeffrey L.\ Andrews\thanks{Department of Statistics, University of British Columbia, Okanagan Campus, Kelowna, BC, V1V 1V7, Canada. Email: jeff.andrews@ubc.ca}}

%\begin{keyword}
%\kwd{Random projection}
%\kwd{Inference}
%\kwd{Pivotal quantities}
%\kwd{Sketching}
%\kwd{Regression}
%\end{keyword}

\maketitle

\begin{abstract} 
Sketching algorithms use random projections to generate a smaller sketched data set, often for the purposes of modelling. Complete and partial sketch regression estimates can be constructed using information from only the sketched data set or a combination of the full and sketched data sets. Previous work has obtained the distribution of these estimators under repeated sketching, along with the first two moments for both estimators. Using a different approach, we also derive the distribution of the complete sketch estimator, but additionally consider the error term under both repeated sketching and sampling. Importantly, we obtain pivotal quantities which are based solely on the sketched data set --- specifically not requiring information from the full data model fit. These pivotal quantities can be used for inference on the full data set regression estimates or the model parameters. For partial sketching, we derive pivotal quantities for a marginal test and an approximate distribution for the partial sketch under repeated sketching or repeated sampling, again avoiding reliance on a full data model fit. We extend these results to include the Hadamard and Clarkson--Woodruff sketches then compare them in a simulation study. 
\end{abstract}

\section{Introduction}
\label{sec:intro}

Sketching algorithms use random projections to reduce a large data set to a smaller sketched data set. This sketched data set can be used as a surrogate for the full data in cases where computation on a massive data set could be computationally infeasible --- even those as simple as summary statistics. In a popular application of sketching, one can calculate the regression estimates on a sketched data set which can be used as a surrogate to the regression estimates that would have arrived via fitting on the full data set. \cite{woodruff14} and \cite{mahoney16} quantify the precision of this surrogate estimate using probabilistic worst-case bounds. Alternatively, \cite{ahfock20} derived distributions and moments for these estimators. For further overview on sketching see \cite{mahoney11} and \cite{cormode11}; for an overview in the context of regression see \cite{ahfock20}.

%\section{Background}
%\subsection{Regression Problem}

In regression,  we suppose the full data set consists of $n \times p$ covariate matrix, $ \mathbf{X}$, and  $n$ dimensional covariate vector, $ \mathbf{Y}$ where $n >p$ and $ \mathbf{X}$ is full rank. The Gaussian regression model assumes that the response was generated via  $ \mathbf{Y} \sim N( \mathbf{X} \boldsymbol{\beta}_0, \sigma^2 I)$ where $\boldsymbol{\beta}_0$ is a $p$ dimensional vector. The least squares estimate of $\boldsymbol{\beta}_0$ using the full data set is 
\begin{equation*}
\boldsymbol{\beta}_F = \left( \mathbf{X}^{\mathsf{T}} \mathbf{X} \right)^{-1} \mathbf{X}^{\mathsf{T}} \mathbf{y}.
\end{equation*}
For inference two other important quantities are the residual sum of squares and the model sum of squares,
\begin{equation*}
SSR_F =  \mathbf{y}^{\mathsf{T}}  \{ \mathbf{I} - \mathbf{X} \left(\mathbf{X}^{\mathsf{T}} \mathbf{X} \right)^{-1} \mathbf{X}^{\mathsf{T}} \} \mathbf{y},
\quad
SSM_F =  \mathbf{y}^{\mathsf{T}} \mathbf{X} \left(\mathbf{X}^{\mathsf{T}} \mathbf{X} \right)^{-1} \mathbf{X}^{\mathsf{T}}  \mathbf{y}. 
\end{equation*}
The $SSR_F/(n-p)$ is the least squares or the unbiased estimate of the model error $\sigma^2$.

%\subsection{Sketching}
Sketching algorithms uses random projections to reduce the size of the data set from $n$ to $k$ observations. These random projections are  represented as a $k \times n$ sketching matrix $\mathbf{S}$. There are many methods to construct a $\mathbf{S}$, for example the Gaussian, Hadamard  \citep{ailon09} or Clarkson--Woodruff sketch \citep{clarkson17} approaches.
We focus, initially, on the Gaussian sketch matrix in which elements are generated by $\mathbf{S}_{ij} \sim N(0, 1/k)$.  Then the Gaussian sketch is a $k\times n$ random matrix $\mathbf{S}$ such that
\begin{equation*}
vec S
\sim N_{kn} \left(  \mathbf{0}_{kn}, \mathbf{I}_n \otimes \frac{1}{k} \mathbf{I}_k \right)
\end{equation*}
where the $vec$ operator converts a matrix into a column vector. The sketched covariate matrix is $\mathbf{X}_s = \mathbf{S} \mathbf{X}$ while the sketched response vector %and regression estimate 
is $\mathbf{y}_s = \mathbf{S} \mathbf{y}$, and therefore  
the complete sketching estimate is 
\begin{equation*} 
\boldsymbol{\beta}_s = \left( \mathbf{X}_{s}^{\mathsf{T}} \mathbf{X}_s \right)^{-1} \mathbf{X}_{s}^{\mathsf{T}} \mathbf{y}_s . 
\end{equation*}
\cite{ahfock20} derived the distribution of the estimator under repeated sketching,
 \begin{equation} \label{complete sketch dist}
\boldsymbol{\beta}_s  \mid \mathbf{X}, \mathbf{y} 
\sim t_{p}\left[ k-p+1,   \boldsymbol{\beta}_{F},
\left( \mathbf{X}^{\mathsf{T}} \mathbf{X} \right)^{-1} \frac{SSR_F}{ k-p+1}  \right].
 \end{equation}
In addition, \cite{ahfock20}  provides a sketching central limit theorem which extends this result to sketching projections beyond the Gaussian sketch to the Hadamard and Clarkson--Woodruff sketch. However a claim from  \cite{ahfock20}  is that ``an immediate consequence of [(\ref{complete sketch dist})] is the ability to generate exact confidence intervals for the elements of '' $\boldsymbol{\beta}_F$. It's important to stress that this result specifically depends on $SSR_F$, which are computed via a full model fit, and thereby require computation of $\boldsymbol{\beta}_F$ --- exactly the issue sketching is trying to resolve. In Section~\ref{sec:complete}, we build on work from \cite{ahfock20} by further investigating the distribution of the sketched residuals and provide pivotal quantities which are computed solely from the sketched data set.

We also further explore inference for partial sketching in Section~\ref{sec:part}. The partial sketch uses information from both the sketched data set and the full data set. Similar to \cite{ahfock20}, we only consider the unbiased partial sketch  denoted 
\begin{equation*}
\boldsymbol{\beta}_p  
%= \frac{(k-p-1)}{k} \times \left( \mathbf{X}^{\mathsf{T}} W \mathbf{X} \right)^{-1} \mathbf{X}^{\mathsf{T}} \mathbf{y}
= \gamma \times \left( \mathbf{X}_{s}^{\mathsf{T}} \mathbf{X}_s \right)^{-1} \mathbf{X}^{\mathsf{T}} \mathbf{y}
\end{equation*}
where $\gamma =(k-p-1)/k$. The partial sketch needs to be adjusted because  $( \mathbf{X}_{s}^{\mathsf{T}} \mathbf{X}_s )^{-1}$ follows the inverse Wishart distribution with degrees of freedom equal to $k-p-1$ and matrix parameter equal to $k (\mathbf{X}^{\mathsf{T}} \mathbf{X})^{-1}$.
%W_p^{-1}\left( k-p-1, k \left(\mathbf{X}^{\mathsf{T}}  X\right)^{-1} \right)$ 
\cite{ahfock20} show that the variance of the of unbiased partial sketch, $var( \boldsymbol{\beta}_p )$ is 
 \begin{equation*}
%var\left( \boldsymbol{\beta}_p \right)
%= 
\frac{  (k-p-1) }{  \left( k -p \right) \left( k-p -3 \right)  } \left\{ \mathbf{y}^{\mathsf{T}}  \mathbf{X}^{\mathsf{T}} \left(\mathbf{X}^{\mathsf{T}}  \mathbf{X} \right)^{-1}  \mathbf{X}  \mathbf{y}   \times   \left(\mathbf{X}^{\mathsf{T}}  \mathbf{X} \right)^{-1} 
+ \frac{k-p+1}{k-p-1} \times  \boldsymbol{\beta}_{F}  \boldsymbol{\beta}_{F}^{\mathsf{T}} \right\}
\end{equation*}
and also give a formula for the unconditional variance. In Section~\ref{sec:part} we seek  distributions with pivots unreliant on the full data model fit, finding approximate distributional forms that can be used for inference in partial sketching regression. 

Finally, we extend these results for complete and partial sketching to include the Hadamard and Clarkson--Woodruff sketches. These results are based on methodology from \cite{ahfock20} which are valid for large $n$. We use a simulation study to verify these results.

\section{Inference for Complete Sketching}\label{sec:complete}
%% No text allowed between section and subsection
\subsection{Distribution of the sketch estimator over  repeated sketches }

We begin by providing an alternative proof from \cite{ahfock20} for the complete sketch estimator which utilizes the marginal and conditional properties of the Wishart distribution. %We define the sketch weight matrix, $W = \mathbf{S}^{\mathsf{T}}  S$, and the sketch estimator can be written as a function of it. 
Foreshadowing, this methodology yields the pivotal quantities that will follow.  Firstly, consider the following Wishart random variable 
\begin{equation} \label{sketch wishart}
%\left[
%\begin{array}{ccc}
% S_{11}  & S_{12}  &   \\
%   S_{21}  & S_{22}  &   
%\end{array}
%\right]
\left(
\begin{array}{cc}
 \mathbf{y}_{s}^{\mathsf{T}} \mathbf{y}_s    & \mathbf{y}_{s}^{\mathsf{T}} \mathbf{X}_s     \\
   \mathbf{X}_{s}^{\mathsf{T}} \mathbf{y}_s  &  \mathbf{X}_{s}^{\mathsf{T}} \mathbf{X}_s     
\end{array}
\right)
%= 
%\left(
%\begin{array}{ccc}
% \mathbf{y}^{\mathsf{T}} W \mathbf{y}    & \mathbf{y}^{\mathsf{T}} W \mathbf{X}  &   \\
%   \mathbf{X}^{\mathsf{T}} W \mathbf{y}  &  \mathbf{X}^{\mathsf{T}} W \mathbf{X}  &   
%\end{array}
%\right)
\sim 
W_{p+1} \left\{  k,   \frac{1}{k} \left(
\begin{array}{ccc}
  \mathbf{y}^{\mathsf{T}} \mathbf{y}  &   \mathbf{y}^{\mathsf{T}} \mathbf{X} &   \\
 \mathbf{X}^{\mathsf{T}}  \mathbf{y}    &  \mathbf{X}^{\mathsf{T}}  \mathbf{X}  &   
\end{array} \right)  \right\} .
\end{equation}
Using the properties of the Wishart distribution \cite[see][Theorem 3.3.9 on page 94]{Gupta2000}, we have that $ \mathbf{X}_{s}^{\mathsf{T}}  \mathbf{X}_s \sim W_{p}\left(k,  \mathbf{X}^{\mathsf{T}}   \mathbf{X} /k \right)$.
If we define $SSR_s = \mathbf{y}_{s}^{\mathsf{T}} \mathbf{y}_s -  \mathbf{y}_{s}^{\mathsf{T}} \mathbf{X}_s (\mathbf{X}_{s}^{\mathsf{T}} \mathbf{X}_s )^{-1} \mathbf{X}_s  \mathbf{y}_s $ then  $(\boldsymbol{\beta}_s, SSR_s)$ are independent. Furthermore,  
\begin{equation}
SSR_s
\sim W_{1} \left[ k-(p+1)+1,  SSR_F  /k \right],
\end{equation}
%The conditional distribution of 
%\begin{equation}
%S_{12} \mid  S_{22} 
%\sim N_{1,p} \left( \boldsymbol{\beta}_k,   \left(  \mathbf{y}^{\mathsf{T}} \left[ \mathbf{I}_n - \mathbf{X} \left( \mathbf{X}^{\mathsf{T}}  X\right)^{-1} \mathbf{X}^{\mathsf{T}} \right] \mathbf{y} /k \right)  \otimes \left(\mathbf{X}^{\mathsf{T}}  W \mathbf{X} \right)^{-1}  \right)
%\end{equation}
and using Theorem 4.2.1 from \cite{Gupta2000},  \begin{equation*}
%\left(  \mathbf{X}^{\mathsf{T}}  W \mathbf{X} \right)^{-1} \mathbf{X}^{\mathsf{T}}  W \mathbf{y}  =
 \boldsymbol{\beta}_s  \mid \mathbf{X}, \mathbf{y} 
\sim T_{p,1}\left\{ k-p+1,  \;  \boldsymbol{\beta}_F , \;
\left( \mathbf{X}^{\mathsf{T}} \mathbf{X} \right)^{-1}, SSR_F \right\},
\end{equation*}
where $T_{p,1}$ is the matrix-$t$ distribution which is equivalent to the multivariate-$t$ found by \cite{ahfock20} and given in  (\ref{complete sketch dist}).

\subsection{Distribution of the sketch estimator over  repeated samples }

To derive a distribution for the sketch estimator over repeated samples, we consider the stochastic representation of the complete sketch estimator (\ref{complete sketch dist}) which is
\begin{equation} \label{stochastic complete sketch dist}
\boldsymbol{\beta}_s 
=  \boldsymbol{\beta}_F +  \left\{ \frac{SSR_F /(k-p+1)}{U/(k-p+1)}  \right\}^{1/2} \mathbf{Z},
 \end{equation}
 where  $ \mathbf{Z} \sim N\{ \mathbf{0}_p, (\mathbf{X}^{\mathsf{T}} \mathbf{X})^{-1} \} $ and $U \sim \chi^2_{k-p+1}$.  When considering random samples we have that $\boldsymbol{\beta}_{F}$ is random as well. Specifically we have  
 \begin{equation*} 
\boldsymbol{\beta}_F\sim N\{ \boldsymbol{\beta}_0, \sigma^2 (\mathbf{X}^{\mathsf{T}} \mathbf{X} )^{-1} \}
\quad \mbox{and} \quad
SSR_F/\sigma^2 = V \sim \chi^2_{n-p}
 \end{equation*}
  then 
\begin{equation}  \label{beta stochastic  relationship}
\boldsymbol{\beta}_s 
%= \boldsymbol{\beta}_F + \left\{ \frac{SSR/(k-p+1)}{U/(k-p+1)}  \right\}^{1/2} Z 
%= (\boldsymbol{\beta} + \sigma Z_2) + \sigma \sqrt{ \frac{  V }{U} } Z_1 
%= \boldsymbol{\beta} + \sigma \sqrt{  \left( 1 +  \frac{ V }{U } \right) } Z_0 
= \boldsymbol{\beta}_0 + \sigma \left( 1 + \frac{ V}{U }   \right)^{1/2} \left(  \mathbf{X}^{\mathsf{T}} \mathbf{X}  \right)^{-1/2}   \mathbf{Z}
\end{equation}
where $  \mathbf{Z}  \sim N( 0_p, I_p )$, $V \sim \chi_{n-p}^2$ and $U \sim \chi^2_{k-p+1}$. The distribution of $U/(U+V)$ follows a  beta distribution with parameters $\alpha=k-p+1$ and $\boldsymbol{\beta}=n-p$. The using Theorem \ref{beta and normal} (in Appendix \ref{app:complete}), the density for the complete sketch estimator, denoted by $f_s\left( \mathbf{b} \right)$, is 
\begin{equation*}
\begin{split}
& \frac{  \Gamma\left( \frac{n}{2} \right)  \Gamma\left( \frac{n+k-p+1}{2} \right) }{  \left( 2 \pi \right)^{p/2} \Gamma\left (\frac{n-p}{2} \right)  } \left| \mathbf{X}^{\mathsf{T}} \mathbf{X} \right|^{1/2}  \\
& \times M\left[ \frac{k-p+1}{2}, \frac{( n+k-p+1)}{2}, - \frac{ (\mathbf{b}-\boldsymbol{\beta}_0)^{\mathsf{T}} \mathbf{X}^{\mathsf{T}} \mathbf{X} (\mathbf{b}-\boldsymbol{\beta}_0) }{2\sigma^2} \right]
\end{split}
\end{equation*}
where the Kummer $M(a,b,z)$ is the confluent hypergeometric functions of the first kind \cite[see][Chapter 13]{NIST}. %{ \color{blue}  https://dlmf.nist.gov/13.1 } 
For large $n-p$, the ratio $U/(U+V)$ can be written as $(n-p)/U$,  so in this case the density is approximately 
\begin{equation} \label{approx complete sample estimator}
\boldsymbol{\beta}_s  \stackrel{approx}{\sim} t_{p}\left\{ k-p+1,  \boldsymbol{\beta}_0,
 \sigma^2  \times \frac{ n-p}{ k-p+1 } \times \left( \mathbf{X}^{\mathsf{T}} \mathbf{X} \right)^{-1}   \right\}.
\end{equation}
Note that when  $k-p+1$ becomes large, $\boldsymbol{\beta}_s$ under repeated sampling can be approximated with a Gaussian distribution.

\subsection{The distribution of the error}
%The distribution of the error never been discussed

An important quantity for inference that has been largely overlooked in the sketching literature are the residuals. A natural estimate of the regression model error are the sketched residuals 
\begin{equation*}
SSR_s = \mathbf{y}_{s}^{\mathsf{T}}   \left( \mathbf{I} - \mathbf{X}_s \left(\mathbf{X}_{s}^{\mathsf{T}}  \mathbf{X}_s \right)^{-1} \mathbf{X}_{s}^{\mathsf{T}} \right) \mathbf{y}_s .
\end{equation*}
We note that the conditional distribution of the sketch estimator  is
\begin{equation} \label{c sketch conditional normal}
\boldsymbol{\beta}_s \mid \mathbf{X}_s, \mathbf{X}, \mathbf{y}
\sim N_{p}\left\{  \boldsymbol{\beta}_{F}, \;
 \frac{ SSR_F }{k}    \left( \mathbf{X}^{\mathsf{T}}  \mathbf{X} \right)^{-1}   \right\}, 
\end{equation}
and then conditional on  $SSR_F$ we have  
\begin{equation*} 
\frac{SSR_{s}}{SSR_F/k} \mid SSR_F  \sim \chi^2_{k-p},
\quad
\frac{SSR_F}{\sigma^2}  \sim \chi^2_{n-p}.
\end{equation*}
Based on these pivotal quantities the expected value is
\begin{equation*}  
E\left( SSR_s \right)
%= E\left( SSR_s \frac{SSR_F/n}{SSR_F/k} \right) 
= E\left\{ E\left(  \frac{SSR_s}{SSR_F/k} \mid SSR_F\right)  \frac{SSR_F}{k} \right\} 
%= E\left[ (k-p) \times \frac{SSR}{k} \right] 
= \sigma^2 \frac{ (k-p) ( n- p)}{ k }
\end{equation*}
and an unbiased estimator for $\sigma^2$ is
\begin{equation}  \label{sigma s estimator}
 SSR_s \times \frac{k}{(n-p)(k-p)}.
\end{equation}
We can apply Theorem \ref{gamma conditional gamma} (see Appendix \ref{app:complete}) to obtain the density for $SSR_s/\sigma^2$ to be
\begin{equation} \label{uncondtional distribution SSR_s}
h(u) = \frac{  u^{(n+k-2p)/4-1   } K_{(k-n)/2}\left( \sqrt{u} \right) }{  \Gamma\left( \frac{k-p}{2} \right) \Gamma\left(\frac{n-p}{2}\right) 2^{(n-p)/2 (k-p)/2 -1} }  
\end{equation}
for $u>0$. For large sample size $n$, $SSR/(n-p)$ converges to $\sigma^2$, so then $SSR_s/\{ \sigma^2 (n-p)/k \}$ is approximately $\chi_{n-p}^2$.

\subsection{Inference on the parameter $\boldsymbol{\beta}_{F}$ under repeated sketching}

Based on the conditional distribution (\ref{c sketch conditional normal}),
the sketch estimate conditional on the covariate sketch, a pivotal quantity for inference on $\boldsymbol{\beta}_{F}$ is 
\begin{equation*}
\frac{ \left( \boldsymbol{\beta}_s -  \boldsymbol{\beta}_{F}\right)^{\mathsf{T}}  \left( \mathbf{X}_{s}^{\mathsf{T}} \mathbf{X}_s \right) \left( \boldsymbol{\beta}_s -  \boldsymbol{\beta}_{F}\right)}{SSR_F/k} 
\sim  \chi^2_p. 
\end{equation*}

Based on the  Wishart structure from (\ref{sketch wishart}) we have that $\boldsymbol{\beta}_s$ and $SSR_s$  are independent, 
%$SSR_s$ conditional on $SSR$ is 
%\begin{equation*} 
%\frac{SSR_{s}}{SSR/k}  \sim \chi^2_{k-p}
%\end{equation*}
which yields the following pivotal quantity 
\begin{equation*}
\frac{ \left( \boldsymbol{\beta}_s -  \boldsymbol{\beta}_{F}\right)^{\mathsf{T}}   \left( \mathbf{X}_{s}^{\mathsf{T}}  \mathbf{X}_s \right) \left( \boldsymbol{\beta}_s -  \boldsymbol{\beta}_{F}\right)/p}{SSR_s/(k-p)} 
\sim  F_{p, k-p}.
\end{equation*}
Then a marginal test for $j^{th}$ element $\boldsymbol{\beta}_{Fj}$ is based on 
\begin{equation} \label{marginal complete test}
\left\{ \boldsymbol{\beta}_{sj} -  \boldsymbol{\beta}_{Fj}\right) \times \left\{   \frac{ SSR_s}{k-p} \times  \left[ \left( \mathbf{X}_{s}^{\mathsf{T}}  \mathbf{X}_s \right)^{-1}\right]_{jj}   \right\}^{-1/2}  
\sim  t_{ k-p }
\end{equation}

%\newpage
\subsection{Inference on the parameter $\boldsymbol{\beta}_0$ under repeated samples}

Inference on $\boldsymbol{\beta}_0$ depends on what quantities one has access to. The simplest estimator can be obtained if we have $\mathbf{y}^{\mathsf{T}} \mathbf{y}$ and $\mathbf{W}_{\star} = \mathbf{S} \mathbf{S}^{\mathsf{T}} $. Here we assume that $\mathbf{W}_{\star}$ is non-singular. Consider the conditional distribution 
\begin{equation*}
\mathbf{y}_s   \mid  \mathbf{X}_s
\sim  N_p\left( \mathbf{X}_s \boldsymbol{\beta}_0, \sigma^2 \mathbf{S} \mathbf{S}^{\mathsf{T}} \right)
\sim  N_p\left( \mathbf{X}_s \boldsymbol{\beta}_0, \sigma^2  \mathbf{W}_{\star} \right) .
\end{equation*}
Then for inference on $\boldsymbol{\beta}_0$ we can use
\begin{equation*}
\frac{ \left( \boldsymbol{\beta}_s - \boldsymbol{\beta}_0 \right)^{\mathsf{T}} \mathbf{X}_{s}^{\mathsf{T}} \mathbf{W}_{\star}^{-1} \mathbf{X}_s   \left( \boldsymbol{\beta}_s - \boldsymbol{\beta}_0 \right)  }{ \sigma^2}  
\sim \chi^2_p. 
\end{equation*}
Then defining the idempotent matrix 
\begin{equation*}
\mathbf{H}_{\star} = \mathbf{S}^{\mathsf{T}} \mathbf{W}_{\star}^{-1} \mathbf{X}_s \left( \mathbf{X}_{s}^{\mathsf{T}} \mathbf{W}_{\star}^{-1} \mathbf{X}_s\right)^{-1}  \mathbf{X}_s \mathbf{W}_{\star}^{-1} \mathbf{S} 
\end{equation*}
and 
\begin{equation*}
SSR_{\star} 
=  \mathbf{y}^{\mathsf{T}}  \left( \mathbf{I}_n - \mathbf{H}_{\star}\right) \mathbf{y} 
= \mathbf{y}^{\mathsf{T}} \mathbf{y} -  \mathbf{y}_{s}^{\mathsf{T}}  \mathbf{W}_{\star}^{-1} \mathbf{X}_s \left( \mathbf{X}_{s}^{\mathsf{T}}  \mathbf{W}_{\star}^{-1} \mathbf{X}_s\right)^{-1}  \mathbf{X}_{s}^{\mathsf{T}}  \mathbf{W}_{\star}^{-1} \mathbf{y}_s . 
\end{equation*}
Since have $SSR_{\star}/\sigma^2  \sim \chi^2_{n-p}$, we can use the ratio to consider the following pivotal quantity 
\begin{equation*}
\frac{ \left( \boldsymbol{\beta}_s -  \boldsymbol{\beta}_0 \right)^{\mathsf{T}}   \left( \mathbf{X}_{s}^{\mathsf{T}}  \mathbf{W}_{\star}^{-1} \mathbf{X}_s \right) \left( \boldsymbol{\beta}_s -  \boldsymbol{\beta}_0 \right)/p}{SSR_{\star} /(n-p)} 
\sim  F_{p, n-p}.
\end{equation*}

If we do not have access to $\mathbf{W}_{\star}$, we can consider the  approximation $\mathbf{W}_{\star} \approx I_k \times n/k$ for large $n$.  Applying the approximation, the distribution of the sketch is
\begin{equation*}
\boldsymbol{\beta}_s  \mid \mathbf{X}_s  
\stackrel{approx}{\sim}
 N \left\{ \boldsymbol{\beta}_0 ,   \frac{   n  \sigma^2 }{k}\left( \mathbf{X}_{s}^{\mathsf{T}}  \mathbf{X}_s\right)^{-1} \right\}
\end{equation*}
and then approximate inference can be based on  
\begin{equation*} 
\frac{\left( \boldsymbol{\beta}_s -\boldsymbol{\beta}_0 \right)^{\mathsf{T}}  \left( \mathbf{X}_{s}^{\mathsf{T}} \mathbf{X}_s\right)  \left( \boldsymbol{\beta}_s -\boldsymbol{\beta}_0 \right)}{  n\sigma^2 /k
}
\stackrel{approx}{\sim}
\chi^2_p . 
\end{equation*}
To estimate $\sigma^2$ we can use (\ref{sigma s estimator}). To obtain a pivotal quantity, we require the distribution of the ratio of a $\chi_p^2$ and a random variable with density (\ref{uncondtional distribution SSR_s}). Applying Theorem \ref{gamma and beta divided by H} from Appendix \ref{app:complete}, we obtain the following distribution 
\begin{equation}
f(r) = \frac{  \left( \frac{n-p}{p} \right)^{p/2 - 1} \; \Gamma\left( \frac{ n}{2} \right)  \Gamma\left( \frac{k }{2} \right) }{  2^{(n-p)/2} \;  \Gamma\left( \frac{p}{2} \right)  \Gamma\left(\frac{k-p}{2}\right)  \Gamma\left( \frac{n-p}{2}\right)}   r^{ -p/2 } \;  U\left( \frac{n}{2},  \frac{n-k +2}{2}+1, \frac{n-p}{2k \; r} \right).
\end{equation}

In another approximation for large sample sizes $n$, we have $SSR/(n-p) \rightarrow \sigma^2$, so then $SSR_s/(\sigma^2 (k-p)/k ) \sim \chi_{k-p}^2$, 
\begin{equation*} 
%\frac{\left( \boldsymbol{\beta}_s -\boldsymbol{\beta}_0 \right)  \left( \mathbf{X}_{s}^{\mathsf{T}}  \mathbf{X}_s\right)  \left( \boldsymbol{\beta}_s -\boldsymbol{\beta}_0 \right)/p}{ SSR_s \times n/(n-p) }
%\approx
\frac{\left( \boldsymbol{\beta}_s -\boldsymbol{\beta}_0 \right)  \left( \mathbf{X}_{s}^{\mathsf{T}}  \mathbf{X}_s\right)  \left( \boldsymbol{\beta}_s -\boldsymbol{\beta}_0 \right)/p}{ SSR_s /(k-p) }
\stackrel{approx}{\sim}
F_{p, k-p} 
\end{equation*}
and marginal test is based on 
%\begin{equation} 
%\frac{\left( \boldsymbol{\beta}_s -\boldsymbol{\beta}_0 \right)  \left( \mathbf{X}_{s}^{\mathsf{T}}  \mathbf{X}_s\right)  \left( \boldsymbol{\beta}_s -\boldsymbol{\beta}_0 \right)/p}{ SSR_s \times n/(n-p) }
%\approx
%\frac{\left( \boldsymbol{\beta}_s -\boldsymbol{\beta}_0 \right)  \left( \mathbf{X}_{s}^{\mathsf{T}}  \mathbf{X}_s\right)  \left( \boldsymbol{\beta}_s -\boldsymbol{\beta}_0 \right)/p}{ SSR_s }
%\stackrel{approx}{\sim}
%F_{p, k-p} 
%\end{equation}
\begin{equation} \label{marginal complete test samples}
\left\{ \boldsymbol{\beta}_{sj} -  \boldsymbol{\beta}_{j}\right) \times \left\{   \frac{ SSR_s}{k-p} \times  \left[ \left( \mathbf{X}_{s}^{\mathsf{T}}  \mathbf{X}_s \right)^{-1}\right]_{jj}   \right\}^{-1/2}  
\sim  t_{ k-p }.
\end{equation}

Finally,  another pivotal quantity can be obtained if one has access to $( \mathbf{X}^{\mathsf{T}} \mathbf{X} )$. Using the stochastic relationship in (\ref{beta stochastic  relationship}), we have the pivotal quantity is equivalent in distribution to
\begin{equation} \label{pivotal stochastic  relationship}
\frac{ \left( \boldsymbol{\beta}_s -  \boldsymbol{\beta}_0 \right)^{\mathsf{T}}  \left( \mathbf{X}^{\mathsf{T}} \mathbf{X} \right) \left( \boldsymbol{\beta}_s -  \boldsymbol{\beta}_0 \right)/p}{SSR_s \times \frac{k}{(n-p)(k-p)} } 
 =^d \frac{ V }{  U R   } \frac{(n-p)(k-p)}{k}
\end{equation}
where $V ~ \sim \chi^2_p$, $R \sim$ BETA$( k-p+1, n-p)$ and $U$ follows distribution in (\ref{uncondtional distribution SSR_s}).
Theorem \ref{gamma divided by H and U} in Appendix \ref{app:complete} gives the distribution of $V/(UR)$ as an integration involving the Kummer U function. Alternatively, one can  simulate the pivotal quantity in (\ref{pivotal stochastic  relationship}) and compare it to the observed value.

\subsection{Aside: A more efficient  complete sketch estimator}

If we have access to $\mathbf{W}_{\star}$ then we note that a more efficient estimator can be constructed via
%\begin{equation}
%\mathbf{y}_s   \mid  X
%\sim  N_p\left( \mathbf{X}_s \boldsymbol{\beta}, \sigma^2  \mathbf{W}_{\star} \right) 
%\sim  N_p\left( \mathbf{X}_s \boldsymbol{\beta}, \sigma^2 S \mathbf{S}^{\mathsf{T}}  \right) 
%\end{equation}
\begin{equation*}
\boldsymbol{\beta}_s^{\star} = \left( \mathbf{X}_{s}^{\mathsf{T}}  \mathbf{W}_{\star}^{-1} \mathbf{X}_s\right)^{-1} \mathbf{X}_{s}^{\mathsf{T}}  \mathbf{W}_{\star}^{-1}  \mathbf{y}_s,
\end{equation*}
which has distribution
\begin{equation*}
\boldsymbol{\beta}_s^{\star}  \mid  \mathbf{S},   \mathbf{X}
\sim  N_p\left\{ \boldsymbol{\beta}_0, \sigma^2  \left( \mathbf{X}_{s}^{\mathsf{T}}   \mathbf{W}_{\star}^{-1} \mathbf{X}_s\right)^{-1} \right\}.
\end{equation*}
It can thus be shown  $\boldsymbol{\beta}_s^{\star}$ and $\boldsymbol{\beta}_s$ have the relation
\begin{equation*}
var \left( \boldsymbol{\beta}_s^{\star}  \right) 
\preceq var \left( \boldsymbol{\beta}_s  \right).
\end{equation*}
This relation can be obtained by considering the following  positive definite matrix
\begin{equation*}
\left[
\begin{array}{cc}
 \mathbf{A}  &  \mathbf{B}
  \\
  \mathbf{B}^{\mathsf{T}} & \mathbf{D}
\end{array}
\right]  
=
\left[
\begin{array}{cc}
 \left( \mathbf{X}_{s}^{\mathsf{T}}  \mathbf{W}_{\star}^{-1} \mathbf{X}_s\right)^{-1}    & \left( \mathbf{X}_{s}^{\mathsf{T}}  \mathbf{W}_{\star}^{-1} \mathbf{X}_s\right)^{-1} 
  \\
  \left( \mathbf{X}_{s}^{\mathsf{T}}  \mathbf{W}_{\star}^{-1} \mathbf{X}_s\right)^{-1}  &  \left( \mathbf{X}_{s}^{\mathsf{T}} \mathbf{X}_s\right)^{-1} \mathbf{X}_{s}^{\mathsf{T}} \mathbf{W}_{\star} \mathbf{X}_s  \left( \mathbf{X}_{s}^{\mathsf{T}} \mathbf{X}_s\right)^{-1}
\end{array}
\right] 
\succeq \mathbf{0},
\end{equation*}
then assuming that $\mathbf{A}$ is non-singular the Schur complement $\mathbf{D} - \mathbf{B}^{\mathsf{T}} \mathbf{A}^{-1} \mathbf{B} \succeq \mathbf{0}$. For large $n$ and $k$,  $\boldsymbol{\beta}_s$ and $\boldsymbol{\beta}_s^{\star}$ are asymptotically equivalent.

\section{Inference for Partial Sketching} \label{sec:part}
%% No text allowed between section and subsection

\subsection{ Distribution for the univariate partial sketch under repeated sketching}

Deriving distributions for partial sketch regression estimators and error is more challenging than in the complete sketch case because the support depends on the parameters. We begin with the univariate case as an illustration, then proceed with an approach that can provide approximate distributional forms.

When there is a single covariate with $ \mathbf{X}=\mathbf{x}$ and $\boldsymbol{\beta}_p$ is either inverse-gamma on $(0, \infty)$ or $(-\infty,0)$ depending on the sign of $\mathbf{X}^{\mathsf{T}} \mathbf{y}$. Continuing, suppose $\mathbf{X}^{\mathsf{T}} \mathbf{y} >0 $ then
\begin{equation*}
\boldsymbol{\beta}_p
 \sim inv\Gamma\left(  \frac{k}{2}, \boldsymbol{\beta}_F \frac{  \gamma k}{2}   \right)
\end{equation*}
where $inv\Gamma(a, b)$ denotes inverse gamma with shape $a$ and shape $b$. A pivotal quantity for inference in the univariate case ($p=1$) is 
\begin{equation} \label{partial with p equal to one}
\frac{ (k-p-1) \boldsymbol{\beta}_F   }{  \boldsymbol{\beta}_p  }  
 \sim \chi^2_k .
\end{equation}

\subsection{ The distribution of the error  }

%Estimating the error for the partial sketch is difficult, 
One possible estimate for $SSR_F$ and $\sigma^2$ can be based on $\mathbf{y}^{\mathsf{T}} \mathbf{y} - SSM_p$ which would be unbiased but can be negative. An alternative estimate might to use the sum of squares of the  partial residuals, $ \mathbf{Y} -  \mathbf{X} \boldsymbol{\beta}_p$. It has the expectation 
\begin{equation} \label{sigma partial estimator}
E\left\{ \left(\mathbf{y} -  \mathbf{X} \boldsymbol{\beta}_p \right)^{\mathsf{T}}  \left(\mathbf{y} -  \mathbf{X} \boldsymbol{\beta}_p \right)  \right\}
= \mathbf{y}^{\mathsf{T}} \mathbf{y} +  SSM_F \left\{ \frac{ (k - p - 1)(p + 1) + 1}{(k - p)(k - p - 3)} - 1\right\}. 
\end{equation}
The derivation of this expectation is given in Appendix \ref{app2}. Finding an estimator that is unbiased for $SSR$ and $\sigma^2$ based on $(\mathbf{X}_{s}^{\mathsf{T}} \mathbf{X}_s)$ and $\mathbf{X}_{s}^{\mathsf{T}} \mathbf{y}$ that is also always positive is an open problem. Instead we suggest using the complete sketch estimate in (\ref{sigma s estimator}).

\subsection{The exact distribution for a linear combination under repeated sketches }

When $p > 2$ the support of the sketch estimator depends on the column space of $\mathbf{X}^{\mathsf{T}}  \mathbf{X}$. However, we can derive the exact distribution for a linear combination of the partial estimator under repeated sketching. This includes the marginal distribution for the $\boldsymbol{\beta}_{pj}$, the $j^{th}$ element of $\boldsymbol{\beta}_p$ which is the most common quantity of interest in practice for inferential purposes. 

The distribution is based on work by \cite{bodnar08}, in which a test for the weights of the global minimum variance portfolio in an elliptical model is considered. The test involves an inverse Wishart matrix and vector of ones. We adapt that result to consider an inverse Wishart matrix and any vector but we require the following assumption. 

\begin{assumption}[For test under repeated sketching]
\label{test repeated sketching assumption}
The vector $\mathbf{m}$ is not in the same direction as $\mathbf{X}^{\mathsf{T}} \mathbf{y}$. That is, we require 
\begin{equation*}
 \frac{ \mathbf{m}^{\mathsf{T}} (\mathbf{X}^{\mathsf{T}} \mathbf{y})}{  \lVert \mathbf{m}  \rVert \times \lVert \mathbf{X}^{\mathsf{T}} \mathbf{y}  \rVert } \neq  1.
\end{equation*}
\end{assumption}
Otherwise, we have that $\mathbf{m} = \mathbf{X}^{\mathsf{T}} \mathbf{y}$ and $\mathbf{m}^T \boldsymbol{\beta}_p$ follows an inverse gamma distribution and is equal to the estimate of the sum of squares from model. For partial sketch this estimate is
\begin{equation*}
 SSM_p = \mathbf{y}^{\mathsf{T}} \mathbf{X} \boldsymbol{\beta}_p  = \gamma \times \mathbf{y}^{\mathsf{T}} \mathbf{X} \left( \mathbf{X}_{s}^{\mathsf{T}} \mathbf{X}_s\right)^{-1} \mathbf{X}^{\mathsf{T}} \mathbf{y}.
\end{equation*}
%\cite{bodnar11} did something. We adapt that result to consider inverse wishart  matrix and any vector. If we have some $p \times m$ matrix $M$  such that the rank of $[ \mathbf{m}^{\mathsf{T}}, \mathbf{X}^{\mathsf{T}} \mathbf{y}] = m+1$. i.e $\mathbf{X}^{\mathsf{T}} \mathbf{y}$ is not in the column space of $M$ as $\mathbf{y}^{\mathsf{T}} \mathbf{X} \boldsymbol{\beta}_p$ would follow an inverse gamma distribution. Under this assumption a linear combination of the partial sketch estimator has distribution equivalent to the following random variables
%If we have some $p \times m$ matrix $M$  such that the rank of $[ \mathbf{m}^{\mathsf{T}}, \mathbf{X}^{\mathsf{T}} \mathbf{y}] = m+1$ (that is $\mathbf{X}^{\mathsf{T}}  y$ is not in the column space of $M$)  then 
%\cite{bodnar11} did something. We adapt that result to consider inverse wishart  matrix and any vector. If we have some $p \times m$ matrix $M$  such that the rank of $[ \mathbf{m}^{\mathsf{T}}, \mathbf{X}^{\mathsf{T}} \mathbf{y}] = m+1$. i.e $\mathbf{X}^{\mathsf{T}} \mathbf{y}$ is not in the column space of $M$ as $\mathbf{y}^{\mathsf{T}} \mathbf{X} \boldsymbol{\beta}_p$ would follow an inverse gamma distribution. 
Under Assumption \ref{test repeated sketching assumption}, a linear combination of the partial sketch estimator, $\mathbf{m}^{\mathsf{T}}  \boldsymbol{\beta}_p$, does not have closed from but has distribution equivalent to the random variables
%\begin{equation*}
% \frac{k}{R} \left[   \mathbf{m}^{\mathsf{T}}  \boldsymbol{\beta}_F  \gamma + \left\{ \frac{ SSM_F \times  \mathbf{m}^{\mathsf{T}}  \left( \mathbf{X}^{\mathsf{T}} \mathbf{X} \right)^{-1} \mathbf{m} - \mathbf{m}^{\mathsf{T}} \boldsymbol{\beta}_F \boldsymbol{\beta}_F^{\mathsf{T}}  \mathbf{m} }{ (k-p+2)/\gamma^2 }  \right\}^{1/2} T \right],
%\end{equation*}
%where $R \sim \chi^2_{k-p+1}$ and $T$ follows a $t$-distribution with $k-p+1$ degrees of freedom and $R$ \& $T$ are independent. 
\begin{equation} \label{partial beta estimator}
 \frac{1}{R} \left[   \mathbf{m}^{\mathsf{T}}  \boldsymbol{\beta}_F  + \left\{ \frac{ SSM_F \times  \mathbf{m}^{\mathsf{T}}  \left( \mathbf{X}^{\mathsf{T}} \mathbf{X} \right)^{-1} \mathbf{m} - \mathbf{m}^{\mathsf{T}} \boldsymbol{\beta}_F \boldsymbol{\beta}_F^{\mathsf{T}}  \mathbf{m} }{ (k-p+2) }  \right\}^{1/2} T \right],
\end{equation}
where $R \sim \Gamma( k-p+1, k-p-1)$  and $T$ follows a $t$-distribution with $k-p+1$ degrees of freedom and $R$ \& $T$ are independent. 
Then for the  $j^{th}$ element of $\boldsymbol{\beta}_p$ we have 
%\begin{equation*}
%\boldsymbol{\beta}_{pj}
%=^d
% \frac{k}{R} \left(  \boldsymbol{\beta}_{Fj}  \gamma +  \left[ \frac{SSM_F \times  \left\{ \left(\mathbf{X}^{\mathsf{T}} \mathbf{X} \right)^{-1}  \right\}_{jj}  - \boldsymbol{\beta}_{Fj}^2    }{ (k-p+2)/ \gamma^2 }   \right]^{1/2} T_{k-p+1} \right).
%\end{equation*}
\begin{equation*}
\boldsymbol{\beta}_{pj}
=^d
 \frac{1}{R} \left(  \boldsymbol{\beta}_{Fj}  +  \left[ \frac{SSM_F \times  \left\{ \left(\mathbf{X}^{\mathsf{T}} \mathbf{X} \right)^{-1}  \right\}_{jj}  - \boldsymbol{\beta}_{Fj}^2    }{ (k-p+2) } \right]^{1/2} T \right).
\end{equation*}

For testing $\mathbf{m}^{\mathsf{T}}  \boldsymbol{\beta}_F = 0$, we can use the following pivotal quantity
\begin{equation*}
 \mathbf{m}^{\mathsf{T}}  \boldsymbol{\beta}_p \left\{ \frac{ (k-p+1) }{   SSM_p \gamma \times \mathbf{m}^{\mathsf{T}}  \left( \mathbf{X}_s \mathbf{X}_s \right)^{-1} \mathbf{m}  - (\mathbf{m}^{\mathsf{T}} \boldsymbol{\beta}_p)^2  } \right\}^{1/2}    
 \sim t_{k-p+1}
\end{equation*}
and then for testing a particular $\boldsymbol{\beta}_{Fj} =0$ we have the following pivotal quantity
\begin{equation} \label{partial beta equal to zero}
\boldsymbol{\beta}_{pj}  \left[ \frac{ (k-p+1)  }{   SSM_p \gamma \times  \left\{ \left(\mathbf{X}_s \mathbf{X}_s \right)^{-1}  \right\}_{jj}     - \boldsymbol{\beta}_{pj}^2  } \right]^{1/2}
 \sim t_{k-p+1}.
\end{equation}

\begin{itemize}
\item Remark: Although testing $\boldsymbol{\beta}_{Fj} =0$ might not strictly make traditional inferential sense, as it is testing if the sample estimate is equal to zero, however it may help to determine which variable coefficients are significantly different from zero. 
\end{itemize}

\subsection{The exact distribution for a linear combination under repeated sampling}

Again when $p > 2$ the support will depend on the column space of $\mathbf{X}^{\mathsf{T}}  \mathbf{X}$. We proceed similarly as in the previous section, but now consider the partial sketch estimate under repeated sampling and require a similar   assumption. %This includes the marginal distribution for the $\boldsymbol{\beta}_{pj}$, the $j^{th}$ element of $\boldsymbol{\beta}_p$ which is the most common quantity of interest in practice. 

%The distribution is based on work by \cite{bodnar11} which considered a test for the weights of the global minimum variance portfolio based on the  linear combination for a product of the inverse Wishart matrix and a normal vector. The test involves an inverse wishart matrix and vector of ones. We adapt that result to consider inverse wishart matrix and any vector but we require the following assumption.

\begin{assumption}[For test under repeated sampling]
The vector $\mathbf{m}$ is not in the same direction as $\boldsymbol{\beta}_0$. i.e. we require 
\begin{equation*}
 \frac{ \mathbf{m}^{\mathsf{T}} \boldsymbol{\beta}_0}{  \lVert \mathbf{m}  \rVert  \times \lVert \boldsymbol{\beta}_0  \rVert } \neq  1, 
\end{equation*}
\label{test repeated sampling assumption}
\end{assumption}
%Otherwise, if $m = \boldsymbol{\beta}_0$ then $m^T \boldsymbol{\beta}_p$ follows an inverse gamma distribution. 
Under assumption \ref{test repeated sampling assumption}, a linear combination of the partial sketch estimator has distribution equivalent to the following random variables
%{ \color{blue} verision one,  Bodnar 2011 has typeos in the degrees of freedom   }
%\begin{equation*}
% \mathbf{m}^{\mathsf{T}}  \boldsymbol{\beta}_p
%=^d
% \frac{k-p+1}{R} \left[   \mathbf{m}^{\mathsf{T}}  \boldsymbol{\beta}_F  +  \sigma \left(1  +\frac{ p-1 }{ k-p+2 }  V  \right)^{1/2}   \left\{   \mathbf{m}^{\mathsf{T}} \left( \mathbf{X}^{\mathsf{T}} \mathbf{X}\right)^{-1} m \right\}^{1/2} Z  \right].
%\end{equation*}
%where $R \sim \chi^2_{k-p+1}$, $Z \sim N(0,1)$,  $V \sim F_{p-1,k-p+2}$ with non-centrality parameter parameter equal to 
%\begin{equation*}
%\frac{ 1  }{ \sigma^2  } \left\{  \boldsymbol{\beta}^{\mathsf{T}}  \left( \mathbf{X}^{\mathsf{T}} \mathbf{X}\right) \boldsymbol{\beta} - \frac{  ( \mathbf{m}^{\mathsf{T}}  \boldsymbol{\beta})^2 }{ \mathbf{m}^{\mathsf{T}} \left( \mathbf{X}^{\mathsf{T}} \mathbf{X}\right)^{-1} m }  \right\}. 
%\end{equation*}
%{ \color{blue} verision two  Bodnar 2011 has typeos in the degrees of freedom   }
\begin{equation*}
 \mathbf{m}^{\mathsf{T}}  \boldsymbol{\beta}_p
=^d
 \frac{k-p+1}{R} \left[   \mathbf{m}^{\mathsf{T}}  \boldsymbol{\beta}_0  +  \sigma  \left( 1  +\frac{ U }{V }  \right)^{1/2} \left\{ \mathbf{m}^{\mathsf{T}} \left( \mathbf{X}^{\mathsf{T}} \mathbf{X}\right)^{-1} \mathbf{m}  \right\}^{1/2} Z  \right].
\end{equation*}
where $R \sim \chi^2_{k-p+1}$, $Z \sim N(0,1)$,  $V \sim \chi^2_{k-p+2}$ and  $U$ follows a non-central $\chi^2_{p-1}$ with non-centrality parameter parameter equal to 
\begin{equation*}
\frac{ 1  }{ \sigma^2  } \left\{  \boldsymbol{\beta}_0^{\mathsf{T}}  \left( \mathbf{X}^{\mathsf{T}} \mathbf{X}\right) \boldsymbol{\beta}_0 - \frac{  ( \mathbf{m}^{\mathsf{T}}  \boldsymbol{\beta}_0)^2 }{ \mathbf{m}^{\mathsf{T}} \left( \mathbf{X}^{\mathsf{T}} \mathbf{X}\right)^{-1} \mathbf{m}  }  \right\}. 
\end{equation*}
This expression can alternatively be expressed using $U/(U+V)$ as this quantity is a non-central beta with parameters $p-1$ and $k-p+2$ with the same non-centrality parameter. 
%{ \color{blue} verision three }
%\begin{equation*}
% \mathbf{m}^{\mathsf{T}}  \boldsymbol{\beta}_p
%=^d
% \frac{k-p+1}{R} \left[   \mathbf{m}^{\mathsf{T}}  \boldsymbol{\beta}_F   +  \sigma \left\{  \frac{ \mathbf{m}^{\mathsf{T}} \left( \mathbf{X}^{\mathsf{T}} \mathbf{X}\right)^{-1} }{ V }    \right\}^{1/2} Z  \right].
%\end{equation*}
%where $R \sim \chi^2_{k-p+1}$, $Z \sim N(0,1)$,  $V$ non-central beta with parameters $p-1$ and $k-p+2$ non-centrally parameter $\mathbf{S}$ and 
%\begin{equation*}
%s= \frac{1}{\sigma^2} \left(  \boldsymbol{\beta}^{\mathsf{T}}  \left( \mathbf{X}^{\mathsf{T}} \mathbf{X}\right)^{-1} \boldsymbol{\beta} - \frac{  ( \mathbf{m}^{\mathsf{T}}  \boldsymbol{\beta})^2 }{ \mathbf{m}^{\mathsf{T}} \left( \mathbf{X}^{\mathsf{T}} \mathbf{X}\right)^{-1} m }  \right). 
%\end{equation*}
%{ \color{blue} END - verision three }

For testing $\mathbf{m}^{\mathsf{T}} \boldsymbol{\beta}_0 = 0$ an exact distribution that requires $\sigma^2$ is 
\begin{equation*}
 \mathbf{m}^{\mathsf{T}}  \boldsymbol{\beta}_p \left\{ \frac{ (k-p+1) }{  \sigma^2  +   SSM_p \gamma \times \mathbf{m}^{\mathsf{T}}  \left( \mathbf{X}_s \mathbf{X}_s \right)^{-1} \mathbf{m}   - (\mathbf{m}^{\mathsf{T}} \boldsymbol{\beta}_p)^2  } \right\}^{1/2}    
 \sim t_{k-p+1}. 
\end{equation*}
To overcome this, we could estimate $\sigma^2$ using $SSR_s k/\{(n-p)(k-p) \}$ since it is independent of $( \mathbf{X}_s^{\mathsf{T}} \mathbf{X}_s )$. Then if $SSM_F$ or $n,k$ is large then an approximate test can be based on 
\begin{equation} \label{partial beta test samples}
 \mathbf{m}^{\mathsf{T}}  \boldsymbol{\beta}_p \left\{ \frac{ (k-p+1) }{  SSR_s \frac{k}{(n-p)(k-p)}  +   SSM_p \gamma \times \mathbf{m}^{\mathsf{T}}  \left( \mathbf{X}_s \mathbf{X}_s \right)^{-1} \mathbf{m}   - (\mathbf{m}^{\mathsf{T}} \boldsymbol{\beta}_p)^2  } \right\}^{1/2}    
% \sim t_{k-p+1}. 
\end{equation}
which follows a $t_{k-p+1}$.

%\newpage
\subsection{An approximate distribution for the partial sketch under repeated sampling}

Here we consider an approximation to the  partial sketch distribution under repeated sampling. 
When conditioning on $\mathbf{X}_s$, the partial sketch solution is 
%\begin{equation}
%\boldsymbol{\beta}_p  = \frac{k-p-1}{k} \times \left( \mathbf{X}^{\mathsf{T}} W \mathbf{X} \right)^{-1} \mathbf{X}^{\mathsf{T}} \mathbf{y}
%\end{equation}
\begin{equation*}
\boldsymbol{\beta}_p | \mathbf{X}_s  \sim N \left\{   \gamma  \left( \mathbf{X}_{s}^{\mathsf{T}}  \mathbf{X}_s \right)^{-1} \mathbf{X}^{\mathsf{T}}  \mathbf{X} \boldsymbol{\beta}_0, \gamma^2 \sigma^2 \times \left( \mathbf{X}_{s}^{\mathsf{T}}  \mathbf{X}_s \right)^{-1}  \mathbf{X}^{\mathsf{T}}  \mathbf{X} \left( \mathbf{X}_{s}^{\mathsf{T}}  \mathbf{X}_s \right)^{-1}  
\right\}. 
\end{equation*}
If for large $k$ we apply the approximation  $ \gamma \left( \mathbf{X}_s \mathbf{X}_s \right)^{-1}  \mathbf{X}^{\mathsf{T}} \mathbf{X} \left( \mathbf{X}_s \mathbf{X}_s \right)^{-1} \approx \left( \mathbf{X}_s \mathbf{X}_s \right)^{-1}$ then we can obtain an approximate density of the partial sketch. This approximate density, $f_p(\mathbf{b})$, for the partial is
\begin{equation} \label{approximate density unconditional}
\begin{split}
&\frac{  2^{ (p-k)/2}  \Gamma_p\left( \frac{k+1}{2} \right) k^{-p(k+1)/2}  }{   (\pi \gamma \sigma^2 )^{p/2} \Gamma_p\left(\frac {k}{2}\right )  \Gamma\left( \frac{k-p}{2} +1\right) } \\ 
& \times  \frac{ e^{ b^{\mathsf{T}}   \left( \mathbf{X}^{\mathsf{T}} \mathbf{X} \right) \boldsymbol{\beta}_0 /\sigma^2 }   K_{(p-k-2)/2}\left[ \left\{ \frac{\left( \mathbf{b}^{\mathsf{T}}  \mathbf{X}^{\mathsf{T}} \mathbf{X} \boldsymbol{\beta}_0 \right)^2}{\gamma \sigma^4 } + k \frac{ \boldsymbol{\beta}_0^{\mathsf{T}}    \mathbf{X}^{\mathsf{T}} \mathbf{X}  \boldsymbol{\beta}_0}{  \sigma^2 } \right\}^{1/2} \right]  }{ 
  \left( 1 + \frac{  \mathbf{b}^{\mathsf{T}}  \mathbf{X}^{\mathsf{T}} \mathbf{X} b }{k \gamma \sigma^2 }\right)^{(k+1)/2} 
\left\{ \frac{ \left(  \mathbf{b}^{\mathsf{T}}    \mathbf{X}^{\mathsf{T}} \mathbf{X} \boldsymbol{\beta}_0 \right)^2 }{\gamma \sigma^4} +k \frac{ \boldsymbol{\beta}_0^{\mathsf{T}}    \mathbf{X}^{\mathsf{T}} \mathbf{X} \boldsymbol{\beta}_0 }{ \sigma^2 } \right\}^{(p-k-2)/4}  }.
\end{split}
\end{equation}
The derivation is given in the Appendix~\ref{app2}. %We note that this is not an optimal approximation, but allows for a closed form density to be found and performs sufficiently well under simulations.
%\begin{itemize}
%\item 
We note that the density in (\ref{approximate density unconditional}) is a special case of the matrix-variate generalized hyperbolic distribution.   \cite{thabane2004} introduce the matrix-variate generalized hyperbolic distribution by compounding the matrix normal distribution with the matrix generalized inverse Gaussian density of the scale matrix. They then consider Bayesian analysis of the matrix-variate generalized hyperbolic distribution in the normal multivariate linear model. However, this particular form of (\ref{approximate density unconditional}) has not been studied before to our knowledge.

Finally, an approximate density for the partial sketch under repeated sketching can also be constructed by applying some jitter or gaussian noise to either $ \mathbf{Y}$ or $\mathbf{X}^{\mathsf{T}} \mathbf{y}$ so that $ \mathbf{X}^{\mathsf{T}} \mathbf{y}_p \sim N( \mathbf{X}^{\mathsf{T}} \mathbf{y}, \delta \mathbf{X}^{\mathsf{T}} \mathbf{X} )$  or  $ (\mathbf{X}^{\mathsf{T}} \mathbf{y})_p \sim N( \mathbf{X}^{\mathsf{T}} \mathbf{y}, \delta \mathbf{I}_n )$ for some small $\delta$. 
%\end{itemize}

\section{Extension to the Hadamard or Clarkson-Woodruff sketch}

The extensions are based on the following assumption and Theorem 2 \cite{ahfock20} which establishes a central limit for the Hadamard or Clarkson-Woodruff sketch.
%From the continuous mapping theorem and Theorem 2 in the main (n) text, it is known  We require assumption 1 and 2 from \cite{ahfock20}. Then first assumption 
%be the ith row in U
For some positive definite matrix matrix, $\mathbf{Q}$, we have that
\begin{equation*} 
\lim_{n \rightarrow \infty}  \frac{1}{n}
\left[
\begin{array}{cc}
 \mathbf{y}^{\mathsf{T}} \mathbf{y}    & \mathbf{y}^{\mathsf{T}} \mathbf{X}     \\
   \mathbf{X}^{\mathsf{T}} \mathbf{y}  &  \mathbf{X}^{\mathsf{T}} \mathbf{X}    
\end{array}
\right]
= 
\left[
\begin{array}{cc}
 \mathbf{Q}_{yy}    & \mathbf{Q}_{yx}     \\
  \mathbf{Q}_{xy}   &  \mathbf{Q}_{xx}
\end{array}
\right]
\end{equation*}
Then based on this assumption, Theorem 2 from \cite{ahfock20} and the continuous mapping theorem we have the following convergence in distribution 
\begin{equation}
 \frac{1}{n} \left(
\begin{array}{cc}
 \mathbf{y}_{s}^{\mathsf{T}} \mathbf{y}_s    & \mathbf{y}_{s}^{\mathsf{T}} \mathbf{X}_s     \\
   \mathbf{X}_{s}^{\mathsf{T}} \mathbf{y}_s  &  \mathbf{X}_{s}^{\mathsf{T}} \mathbf{X}_s     
\end{array}
\right) \rightarrow W_{p+1} \left\{ k, \frac{1}{k}
\left[
\begin{array}{cc}
 \mathbf{Q}_{yy}    & \mathbf{Q}_{yx}     \\
  \mathbf{Q}_{xy}   &  \mathbf{Q}_{xx}
\end{array}
\right] 
\right\}
\end{equation}
Then applying the continuous mapping theorem we have convergence in distribution for the results in the previous two section while exchanging any sample quantity with $\mathcal{Q}$.  e.g. \cite{ahfock20} note in the supplementary material that $( n^{-1}  \mathbf{X}_{s}^{\mathsf{T}} \mathbf{X}_s )^{-1}$ has an inverse Wishart distribution with degrees of freedom equal to $k-p-1$ and matrix parameter equal to $k \mathbf{Q}_{xx}^{-1}$.

\section{Simulation Study}

Since the pivotal quantities are approximate when using the Hadamard or Clarkson-Woodruff sketch, we perform a simulation study to assess the approximation. We also include the Gaussian sketch as a way to gauge the approximation, as the distributions are exact in this case. We generate data from an artificial linear model so that we can consider repeated sketches and samples. Specifically, we simulate from a regression model with 
\begin{equation*}
\beta = \left(-5, -4,  -3, -2, -1, 0, 1, 2, 3, 4, 5\right) 
\quad \mbox{and}\quad 
\sigma^2=1.
\end{equation*}
We generate samples of size, $n$, equal to $10^4$ and sketches of size $k=21$ so that $k-p=10$. This could be considered to be a low number of sketches, but in general the sample size will depend on both the level of precision the user desires and their computational resources.  We shall focus on estimators for the parameters $\beta_1= -5$ and $\beta_6=0$. %Then we consider the pivotal quantity distribution for testing if $\beta_6=0$ for the complete and partial sketching. 

Generating a single sample with sample size ($n=10^4$) used for repeated sketching yields
\begin{equation*}
\beta_F = (-4.98, -3.98, -3.03, -1.99, -0.96,  0.03,  0.94,  2.01,  2.94,  4.02,  4.99)
\end{equation*}
and $\sigma^2_F =  0.99$. Next, we generate $m=10^4$ sketches (of size $k=21$) and consider the distribution of the complete and partial estimators for $\beta_1$ and $\beta_6$  in Figure~\ref{fig:beta1}. These histograms show agreement between theoretical and simulated distributions. The theoretical distributions of the complete and partial sketch estimators are given in (\ref{complete sketch dist}) and on (\ref{partial beta estimator}), respectively. As the empirical histograms appear to match the density curves in all cases, it seems that the approximate distributions for the Hadamard and Clarkson-Woodruff sketches are reasonable.

\begin{figure}[!htbp]
\includegraphics[width=5.5in, height=2in]{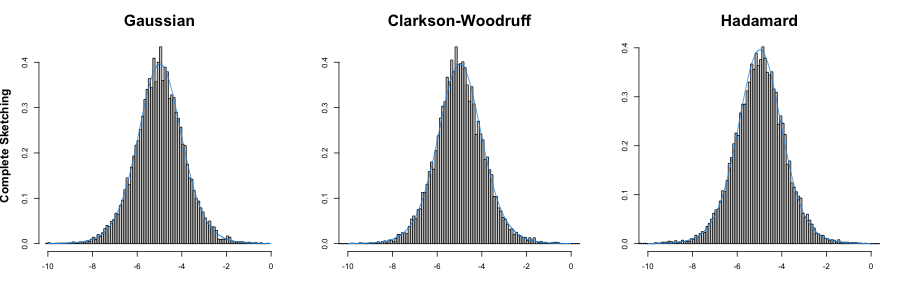}
\includegraphics[width=5.5in, height=2in]{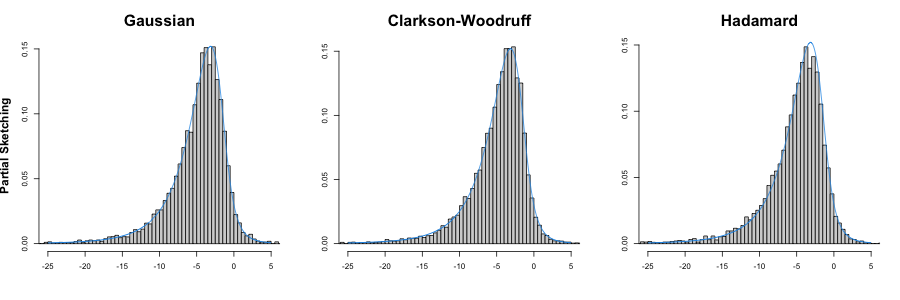}
\centering
\caption{Histograms under repeated sketching of the simulated complete and partial sketch estimator ($\tilde{\beta}_{s1}$  \& $\tilde{\beta}_{p1}$) using three different sketches under repeated sketching and an overlap of the density given in (\ref{partial beta estimator}). The Gaussian sketch is exact and the other two are approximate. The distribution of the complete and partial sketch estimators are based on (\ref{complete sketch dist}) and on (\ref{partial beta estimator}), respectively. }
\label{fig:beta1}
\end{figure}

In Figure \ref{fig:sketchPivotalbetaF}, we consider the distribution of the pivotal quantities for testing if $\beta_{F1}=0$ and $\beta_{F6}=0$ under both complete and partial sketching. The complete and partial sketch tests are based on (\ref{marginal complete test}) and (\ref{partial beta equal to zero}), respectively.  We see that for testing $\beta_{F6}=0$, both the complete and partial sketch approaches are performing approximately the same. However, for testing $\beta_{F1}=0$, it seems the partial sketch has substantially less power to detect the departure from the null hypothesis. This makes sense because, as evident in Figure~\ref{fig:beta1}, the partial estimator is a less efficient estimator for $\beta_{F1}$.  Figure \ref{fig:testingBeta6} in Appendix \ref{app3} show that approximate distribution for testing if $\beta_{F6}=0$ is reasonable when we change the type of sketch from the Gaussian to the Hadamard or Clarkson-Woodruff.

\begin{figure}[!htbp]
\includegraphics[width=5in, height=2.5in]{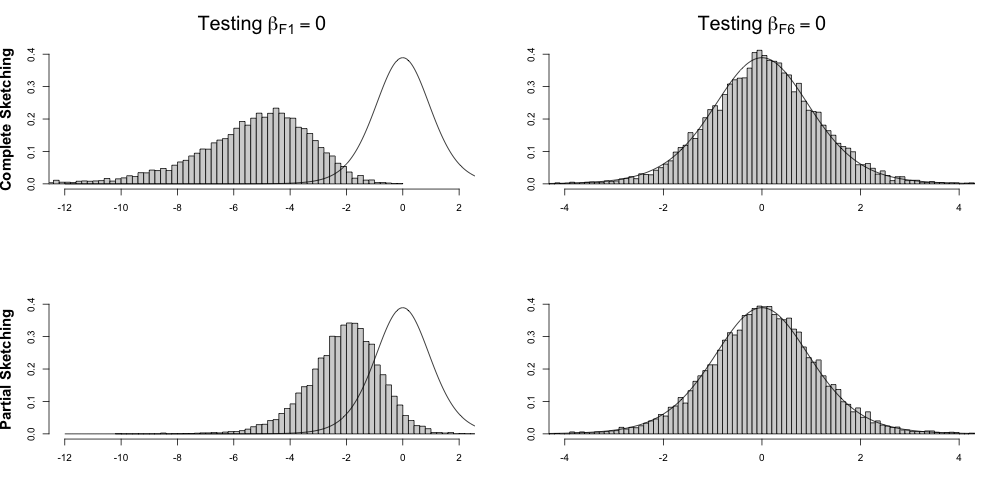}
\centering
\caption{Histograms under repeated sketching of the simulated pivotal quantity for testing if $\beta_{F1}=0$ (left column) and if $\beta_{F6}=0$ (right column) when performing complete sketching (top row) and when performing partial sketching (bottom row). The complete and partial sketch test are based on (\ref{marginal complete test}) and (\ref{partial beta equal to zero}), respectively. The density curve shows the distribution under the null hypothesis.  }
\label{fig:sketchPivotalbetaF}
\end{figure}

For repeated samples, we fix the covariate matrix and generate a new response vector and a sketching matrix for each replication. The approximate distributions are similar to the repeated sketching case.  Figure~\ref{fig:sketchPivotalbeta} shows the histograms of the simulated pivotal quantity for testing if $\beta_{1}=0$ and $\beta_{6}=0$ when we consider both complete and partial sketching. Figure \ref{fig:testingBeta6samples} in Appendix \ref{app3} show the approximate distribution for testing if $\beta_{6}=0$ is reasonable when we change the type of sketch from the Gaussian to the Hadamard or Clarkson-Woodruff sketch.

\begin{figure}[!htbp]
\includegraphics[width=5in, height=2.5in]{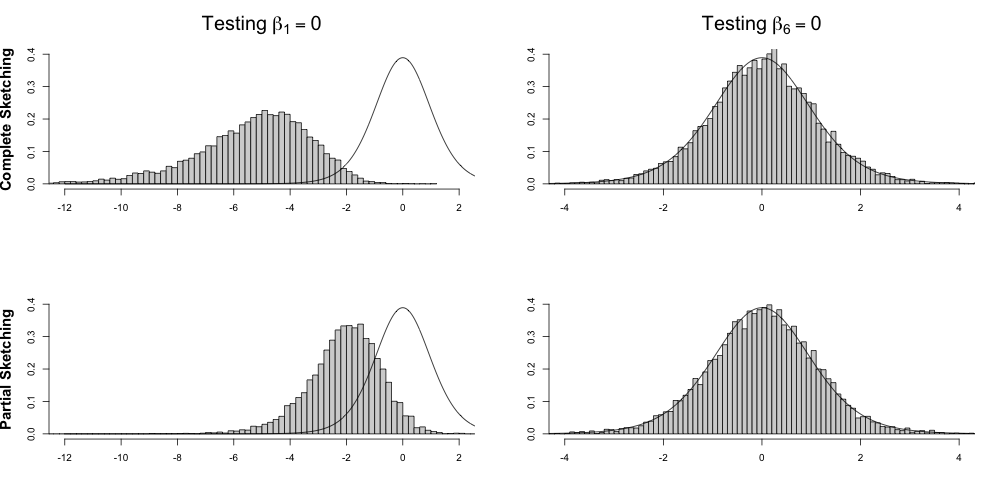}
\centering
\caption{Histograms under repeated samples of the simulated pivotal quantity for testing if $\beta_{1}=0$ (left column) and if $\beta_{6}=0$ (right column) when performing complete sketching (top row) and when performing partial sketching (bottom row). The complete and partial sketch test are based on (\ref{marginal complete test samples}) and (\ref{partial beta test samples}), respectively. The density curve shows the distribution under the null hypothesis.   }
\label{fig:sketchPivotalbeta}
\end{figure}

\section{Conclusion}

We derive the pivotal quantities for complete and partial estimators under repeated sketches and samples. We extended these results to include the Hadamard and Clarkson--Woodruff sketches then in a simulation study showed that the approximations hold well even when $k-p$ is small.

%Complete and partial sketch regression estimates can be constructed using information from only the sketched data set or a combination of the full and sketched data sets. Previous work has obtained the distribution of these estimators under repeated sketching, along with the first two moments for both estimators. Using a different approach, we also derive the distribution of the complete sketch estimator, but additionally consider the error term under both repeated sketching and sampling. Importantly, we obtain pivotal quantities which are based solely on the sketched data set --- specifically not requiring information from the full data model fit. These pivotal quantities can be used for inference on the full data set regression estimates or the model parameters. For partial sketching, we derive pivotal quantities for a marginal test and an approximate distribution for the partial sketch under repeated sketching or repeated sampling, again avoiding reliance on a full data model fit. We extend these results to include the Hadamard and Clarkson--Woodruff sketches then compare them in a simulation study. 

%\section*{Acknowledgement}
%Acknowledgements should appear after the body of the paper but before any appendices and be as brief as possible subject to politeness. Information, such as ontract numbers, of no interest to readers, must be excluded.

\newpage 
\appendix

\section{Complete Sketching Proofs} \label{app:complete}

\begin{theorem}
\label{beta and normal}
If $ \mathbf{Y} \sim N( \boldsymbol{\mu}, \boldsymbol{\Sigma}/R )$ and $R \sim BETA(\alpha, \eta)$  then the distribution of $ \mathbf{Y}$ is
\begin{equation} \label{den beta and normal}
h(\mathbf{y}) = \frac{  \Gamma(\alpha+\eta)  \Gamma( \eta +p/2) }{ \left( 2 \pi \right)^{p/2}   \left| \boldsymbol{\Sigma} \right|^{1/2}  \Gamma(\eta) }   M\left[ \alpha, \eta+p/2+\alpha, - (\mathbf{y} -\boldsymbol{\mu})^{\mathsf{T}} \boldsymbol{\Sigma}^{-1} (\mathbf{y} -\boldsymbol{\mu})/2 \right] 
\end{equation}
\end{theorem}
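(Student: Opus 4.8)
The plan is to derive the marginal density of $\mathbf{Y}$ by integrating out the mixing variable $R$, i.e.\ by computing
\begin{equation*}
h(\mathbf{y}) = \int_0^1 \phi_p\!\left(\mathbf{y}; \boldsymbol{\mu}, \boldsymbol{\Sigma}/r\right) \, f_{\mathrm{BETA}}(r; \alpha, \eta)\, dr,
\end{equation*}
where $\phi_p$ is the $p$-variate normal density and $f_{\mathrm{BETA}}$ is the beta density on $(0,1)$. First I would write out the integrand explicitly: the normal density contributes a factor $r^{p/2}$ from the determinant $|\boldsymbol{\Sigma}/r|^{-1/2}$ and an exponential factor $\exp\{-\tfrac{r}{2}(\mathbf{y}-\boldsymbol{\mu})^{\mathsf{T}}\boldsymbol{\Sigma}^{-1}(\mathbf{y}-\boldsymbol{\mu})\}$, while the beta density contributes $r^{\alpha-1}(1-r)^{\eta-1}$ up to the normalizing constant $\Gamma(\alpha+\eta)/\{\Gamma(\alpha)\Gamma(\eta)\}$. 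Collecting powers of $r$, the integral becomes
\begin{equation*}
\frac{1}{(2\pi)^{p/2}|\boldsymbol{\Sigma}|^{1/2}} \cdot \frac{\Gamma(\alpha+\eta)}{\Gamma(\alpha)\Gamma(\eta)} \int_0^1 r^{\alpha+p/2-1}(1-r)^{\eta-1} e^{-r q/2}\, dr,
\end{equation*}
where I abbreviate $q = (\mathbf{y}-\boldsymbol{\mu})^{\mathsf{T}}\boldsymbol{\Sigma}^{-1}(\mathbf{y}-\boldsymbol{\mu})$.

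The key step is then to recognize this one-dimensional integral as an integral representation of the Kummer confluent hypergeometric function $M$. The standard Euler-type integral (see \cite{NIST}, Chapter 13) is
\begin{equation*}
M(a,b,z) = \frac{\Gamma(b)}{\Gamma(a)\Gamma(b-a)} \int_0^1 e^{z t}\, t^{a-1}(1-t)^{b-a-1}\, dt, \qquad \Re b > \Re a > 0.
\end{equation*}
Matching parameters, I set $a = \alpha + p/2$, $z = -q/2$, and require $b - a - 1 = \eta - 1$, i.e.\ $b = \eta + p/2 + \alpha$; note $b - a = \eta > 0$ and $a > 0$, so the representation applies. Substituting, the integral equals $\dfrac{\Gamma(a)\Gamma(b-a)}{\Gamma(b)} M(a,b,z) = \dfrac{\Gamma(\alpha+p/2)\,\Gamma(\eta)}{\Gamma(\eta+p/2+\alpha)}\, M\!\left(\alpha + p/2,\ \eta+p/2+\alpha,\ -q/2\right)$.

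Plugging this back in, the $\Gamma(\alpha)\Gamma(\eta)$ from the beta constant cancels against the $\Gamma(\alpha+p/2)\Gamma(\eta)$? — here I would need to be slightly careful: the $\Gamma(\eta)$ cancels but $\Gamma(\alpha)$ does not cancel against $\Gamma(\alpha+p/2)$, so I expect the stated form of the constant in \eqref{den beta and normal} to follow only after using a Kummer transformation, namely $M(a,b,z) = e^{z} M(b-a, b, -z)$, which rewrites $M(\alpha+p/2,\ \eta+p/2+\alpha,\ -q/2)$ as $e^{-q/2} M(\eta,\ \eta+p/2+\alpha,\ q/2)$ — but that does not match the claimed argument sign either. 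So the main obstacle, and the step I would check most carefully, is reconciling the parameters of $M$ and the precise form of the normalizing constant: the theorem statement has first argument $\alpha$ (not $\alpha+p/2$) and constant involving $\Gamma(\eta+p/2)$ (not $\Gamma(\alpha+p/2)$ or $\Gamma(\eta)$ in the denominator with $\Gamma(\alpha)$). This strongly suggests the intended route applies the Kummer transformation $M(a,b,z)=e^z M(b-a,b,-z)$ to the integral representation \emph{with the roles of $a$ and $b-a$ swapped at the outset}, i.e.\ writing $\int_0^1 e^{-rq/2} r^{\eta-1}(1-r)^{\alpha+p/2-1} dr$ after the substitution $r \mapsto 1-r$ in the original integral, which produces $M(\eta, \eta+p/2+\alpha, -q/2)$ with constant $\Gamma(\eta)\Gamma(\alpha+p/2)/\Gamma(\eta+p/2+\alpha)$, and then a final transformation to land on the displayed form. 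I would carry out this bookkeeping explicitly to confirm \eqref{den beta and normal}, treating it as the one genuinely delicate point; everything else is substitution and cancellation.
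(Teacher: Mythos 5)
Your route is exactly the paper's: integrate the normal density against the beta mixing density and recognize the resulting Euler-type integral as a Kummer $M$ function. Your bookkeeping up to the one-dimensional integral is correct, and so is your evaluation of it: with $q=(\mathbf{y}-\boldsymbol{\mu})^{\mathsf{T}}\boldsymbol{\Sigma}^{-1}(\mathbf{y}-\boldsymbol{\mu})$ the marginal density is
\begin{equation*}
h(\mathbf{y})=\frac{\Gamma(\alpha+\eta)\,\Gamma(\alpha+p/2)}{(2\pi)^{p/2}\,\left|\boldsymbol{\Sigma}\right|^{1/2}\,\Gamma(\alpha)\,\Gamma(\alpha+\eta+p/2)}\;M\!\left(\alpha+\tfrac{p}{2},\ \alpha+\eta+\tfrac{p}{2},\ -\tfrac{q}{2}\right).
\end{equation*}
The ``one genuinely delicate point'' you defer --- reconciling this with the displayed form of (\ref{den beta and normal}) --- cannot be carried out, because the theorem as printed is not correct. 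A check at $\mathbf{y}=\boldsymbol{\mu}$ settles it: there $h(\boldsymbol{\mu})=(2\pi)^{-p/2}|\boldsymbol{\Sigma}|^{-1/2}E(R^{p/2})$, and $E(R^{p/2})=B(\alpha+p/2,\eta)/B(\alpha,\eta)=\Gamma(\alpha+p/2)\Gamma(\alpha+\eta)/\{\Gamma(\alpha)\Gamma(\alpha+\eta+p/2)\}$, which agrees with your constant and not with $\Gamma(\alpha+\eta)\Gamma(\eta+p/2)/\Gamma(\eta)$. Likewise, the only Kummer transformation available, $M(a,b,z)=e^{z}M(b-a,b,-z)$, turns your expression into $e^{-q/2}M(\eta,\ \alpha+\eta+p/2,\ q/2)$, so no transformation lands on $M(\alpha,\ \eta+p/2+\alpha,\ -q/2)$ with argument $-q/2$; your suspicion on this point is exactly right.

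For reference, the discrepancy originates in two slips in the paper's own proof. First, after absorbing the factor $r^{p/2}$ coming from $|\boldsymbol{\Sigma}/r|^{-1/2}$, the exponent $p/2$ silently migrates from $r^{\alpha+p/2-1}$ onto $(1-r)^{\eta+p/2-1}$ in the penultimate display, i.e.\ the roles of the two beta parameters get swapped; this is why the printed $M$ has first parameter $\alpha$ and the constant carries $\Gamma(\eta+p/2)$. Second, the integral representation of $M$ is quoted without the factor $\Gamma(b)$ in the numerator, which is why $\Gamma(\eta+p/2+\alpha)$ is missing from the denominator of the stated normalizing constant (as written, (\ref{den beta and normal}) does not integrate to one). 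Your derivation is the correct one: stop at your intermediate expression rather than forcing it into the displayed form, and note that the corrected constant and $M$-parameters propagate to the complete-sketch density that the paper derives from this theorem.
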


\begin{proof}[Proof of Theorem~\ref{beta and normal}]
Let $\mathbf{z}= \boldsymbol{\Sigma}^{-1/2} ( \mathbf{y} -\boldsymbol{\mu})$ then $\mathbf{z}\sim N( 0, \mathbf{I}_p/R )$. The marginal distribution of $\mathbf{z}$ is 
\begin{equation*}
\begin{split}
h(\mathbf{z}) &= \int_0^1 g(\mathbf{z}\mid r) b(r) du \\
%&= \int_0^1 \left( 2 \pi \right)^{-p/2} \left| 1/r I_p \right|^{-1/2}  \exp\left( - r \frac{ \mathbf{z}^{\mathsf{T}} \mathbf{z}}{2}   \right)   \frac{r^{\alpha-1} (1-r)^{\boldsymbol{\beta}-1}}{B(\alpha,\boldsymbol{\beta})}dr \\
&= \int_0^1 \left( 2 \pi \right)^{-p/2} r^{p/2}  \exp\left( -  r \frac{ \mathbf{z}^{\mathsf{T}} \mathbf{z} }{2}  \right)   \frac{r^{\alpha-1} (1-r)^{\eta-1}}{B(\alpha, \eta )}dr \\
%&=  \frac{ \left( 2 \pi \right)^{-p/2} }{ B(\alpha, \boldsymbol{\beta}) }  \int_0^1   \exp\left( - r  \frac{ \mathbf{z}^{\mathsf{T}} \mathbf{z}}{2}  \right)  r^{\alpha+p/2-1} (1-r)^{\boldsymbol{\beta}-1} dr  \\
&=  \frac{ \left( 2 \pi \right)^{-p/2} B(\alpha+p/2, \eta ) }{ B(\alpha, \eta) }  \int_0^1   \exp\left( - r  \frac{ \mathbf{z}^{\mathsf{T}} \mathbf{z}}{2}  \right)  \frac{ r^{\alpha+p/2-1} (1-r)^{\eta-1}}{ B(\alpha+p/2, \eta) } dr  \\
&=  \frac{ \left( 2 \pi \right)^{-p/2} B(\alpha+p/2,\eta) }{ B(\alpha,\eta) }  M_{BETA}\left( - r  \frac{ \mathbf{z}^{\mathsf{T}} \mathbf{z}}{2} , \alpha+p/2, \eta \right)  
 \end{split}
\end{equation*}
which involves the moment generating function of a beta distribution. The moment generating function can be expanded using the Maclaurin series of $\exp(x)= \sum_{m=0}^\infty \frac{x^m}{m!}$ to obtain
\begin{equation*}
h(\mathbf{z}) =   \left( 2 \pi \right)^{-p/2}  \sum_{m=0}^\infty \frac{1}{m!} \left( -\frac{ \mathbf{z}^{\mathsf{T}} \mathbf{z}}{2}  \right)^m    \frac{ B(\alpha+p/2+m,\eta)  }{ B(\alpha,\eta)  } .
\end{equation*}
Alternatively, we use the confluent hypergeometric functions of the first kind or the Kummer M,  ${}_{1}F_{1}(a,b,z) = M(a,b,z)$ because it has the following integral representation
 for $a>0$ and $b>0$
\begin{equation*}
M(a, b, z) = \frac{1}{\Gamma(a) \Gamma(b-a) } \int_0^{1} e^{ z t } t^{a-1} (1-t)^{b-a-1} dt .
\end{equation*}
This yield another expression for the marginal distribution of $z$.
\begin{equation*}
\begin{split}
h(\mathbf{z}) &=  \frac{ \left( 2 \pi \right)^{-p/2}  \Gamma(\alpha+\eta) }{ \Gamma(\alpha) \Gamma(\eta)  }  \int_0^1   \exp\left( - r \frac{ \mathbf{z}^{\mathsf{T}} \mathbf{z}}{2}  \right)   r^{\alpha-1} (1-r)^{ (\eta+p/2+\alpha) -\alpha  -1}  \\
% &=  \frac{ \left( 2 \pi \right)^{-p/2}  \Gamma(\alpha+b) }{ \Gamma(\alpha) \Gamma(\mathbf{b})  } \frac{  \Gamma(\alpha) \Gamma( \eta + p/2 -\alpha) }{  \Gamma(\alpha) \Gamma(\eta + p/2 -\alpha ) } \int_0^1   \exp\left( - r \frac{ \mathbf{z}^{\mathsf{T}} \mathbf{z}}{2}  \right)   r^{\alpha-1} (1-r)^{ (\eta+p/2+\alpha) -\alpha  -1}  \\
  &=  \frac{ \left( 2 \pi \right)^{-p/2}  \Gamma(\alpha+\eta) }{ \Gamma(\alpha) \Gamma(\eta)  }   \Gamma(\alpha) \Gamma( \eta +p/2)   M(\alpha, \eta+p/2+\alpha, -\mathbf{z}^{\mathsf{T}} \mathbf{z}/2 ) .
 \end{split}
\end{equation*}
Then applying the change of variables $ \mathbf{Y}= \boldsymbol{\mu} + \boldsymbol{\Sigma}^{1/2} \mathbf{z}$ yields  the density in  (\ref{den beta and normal}).
\end{proof}

\begin{theorem}
\label{gamma conditional gamma}
If $U | V=v \sim \Gamma( k= \alpha, \theta=2v )$ and $V \sim \Gamma( k= \lambda, \theta=2)$ then the density of $U$ is 
\begin{equation} \label{density gamma conditional gamma} 
H\left( u \mid \lambda, \alpha \right) 
 = \frac{ 2^{-(\lambda+ \alpha) +1} }{  \Gamma(\alpha) \Gamma(\lambda) }  u^{(\lambda+\alpha)/2-1} K_{\alpha-\lambda}\left( \sqrt{u}   \right) 
\end{equation}
for $u >0$. 
\end{theorem}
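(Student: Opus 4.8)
The plan is to obtain $H(u)$ by direct marginalization: write $H(u)=\int_0^\infty f_{U\mid V}(u\mid v)\,f_V(v)\,dv$ and recognize the resulting $v$-integral as a standard integral representation of the modified Bessel function of the second kind.

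First I would write out the two densities in the stated shape--scale parametrization: $f_{U\mid V}(u\mid v)=\{\Gamma(\alpha)(2v)^\alpha\}^{-1}u^{\alpha-1}e^{-u/(2v)}$ and $f_V(v)=\{\Gamma(\lambda)2^\lambda\}^{-1}v^{\lambda-1}e^{-v/2}$, both supported on $(0,\infty)$. Multiplying these and pulling out the factors not depending on $v$ gives
\[
H(u)=\frac{u^{\alpha-1}}{\Gamma(\alpha)\Gamma(\lambda)\,2^{\alpha+\lambda}}\int_0^\infty v^{(\lambda-\alpha)-1}\exp\!\left(-\frac{v}{2}-\frac{u}{2v}\right)dv.
\]

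Next I would invoke the classical identity $\int_0^\infty x^{\nu-1}\exp(-bx-a/x)\,dx=2(a/b)^{\nu/2}K_\nu(2\sqrt{ab})$, valid for $a,b>0$ and any real $\nu$, applied with $\nu=\lambda-\alpha$, $b=1/2$, and $a=u/2$. This evaluates the integral to $2\,u^{(\lambda-\alpha)/2}K_{\lambda-\alpha}(\sqrt{u})$. Substituting back, using the symmetry $K_{-\nu}=K_\nu$ to replace $K_{\lambda-\alpha}$ by $K_{\alpha-\lambda}$, and simplifying the exponent of $u$ via $\alpha-1+(\lambda-\alpha)/2=(\lambda+\alpha)/2-1$, yields exactly (\ref{density gamma conditional gamma}).

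The only genuine content is recalling the Bessel integral representation and matching constants; the thing to be careful about is the shape--scale versus shape--rate convention, so that the exponents $-u/(2v)$ and $-v/2$ come out correctly, together with noting that the integral converges at both endpoints for all $\alpha,\lambda>0$, so no further restriction is needed. One could alternatively expand $e^{-u/(2v)}$ in a Maclaurin series and integrate term by term, mirroring the proof of Theorem~\ref{beta and normal}, but the closed form via the Bessel identity is cleaner.
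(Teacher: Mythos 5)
Your proposal is correct and follows essentially the same route as the paper: marginalize over $V$, reduce to the integral $\int_0^\infty v^{\lambda-\alpha-1}e^{-\frac{1}{2}(u/v+v)}\,dv$, and evaluate it via the standard modified-Bessel representation (the paper cites NIST Eq.~10.43.19), using $K_{-\nu}=K_\nu$ to land on $K_{\alpha-\lambda}(\sqrt{u})$. The constants and the exponent $(\lambda+\alpha)/2-1$ match the paper's derivation exactly.
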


\begin{proof}[Proof of Theorem~\ref{gamma conditional gamma}]
%We begin with 
%$W \sim \chi^2_\nu$ then $cW \sim \Gamma( k=\nu/2, \theta=2c)$ and the gamma density is 
%\begin{equation}\label{SSR dist}
%g(w \mid \alpha,\theta) =  \frac{1}{ \theta^{\alpha}  \Gamma(\alpha)} w^{\alpha-1}e^{-w/\theta } \quad \text{ for } w > 0 \text{ and } \alpha, \theta 
%\end{equation}
The density of $U$ when $U \sim \Gamma( k= \alpha, \theta=2y )$ and $V \sim \Gamma( k= \lambda, \theta=2)$ is 
\begin{equation*}  \begin{split}
f(x) 
&= \int_{0}^{\infty} g(u \mid  \nu, 2v ) g(v \mid \psi, 2 ) dy \\ 
&= \int_{0}^{\infty}  \frac{1}{ (2v)^{\alpha}  \Gamma(\alpha)} u^{\alpha-1}e^{-u/(2v) }  \frac{1}{ 2^{\lambda}  \Gamma(\lambda)} v^{\lambda-1}e^{-v/2 } dv \\ 
& =  \frac{ 2^{-(\lambda+ \alpha)} }{  \Gamma(\alpha) \Gamma(\lambda) }  u^{\alpha-1} \int_{0}^{\infty} v^{\lambda-\alpha-1} e^{- \frac{1}{2}( u/v +v) } dv \\ 
%& =   \frac{ 2^{-(\lambda+ \alpha)} }{  \Gamma(\alpha) \Gamma(\lambda) }  u^{\alpha-1} 2 u^{ (\lambda-\alpha)/2} K_{\alpha-\lambda}\left( \sqrt{u} \right) \\ 
& = \frac{ 2^{-(\lambda+ \alpha) +1} }{  \Gamma(\alpha) \Gamma(\lambda) }  u^{(\lambda+\alpha)/2-1} K_{\alpha-\lambda}\left( \sqrt{u} \right)  \\ 
\end{split}
\end{equation*}
where $K_{\nu}(z)$ is the modified Bessel function of the second kind, \cite[See Chapter 10]{NIST}. 
\end{proof}

\begin{remark}
The density (\ref{density gamma conditional gamma}) is related to integral for the $K_{\nu}(z)$ \cite[Eq.~10.43.19]{NIST}. %{ \color{blue} see NIST http://dlmf.nist.gov/10.43.E19. } 
The moments of this density are
\begin{equation*}
E \left[U^k \right] = \frac{ 2^{2k}  \Gamma(k + \alpha)\Gamma(\lambda + k)}{\Gamma(\alpha)\Gamma(\lambda) } .
\end{equation*}
\end{remark}

\begin{theorem}
\label{gamma and beta divided by H}
If  $Q \sim \Gamma( k= \phi, \theta=2 )$ and $U \sim H(\alpha, \lambda)$ density from (\ref{density gamma conditional gamma}) then the distribution of the ratio, $Q/U$, is 
\begin{equation} \label{pivotal gamma divided H}
f(r) = \frac{  \Gamma( \alpha +\phi ) \Gamma( \lambda + \phi)   }{ 2^{\lambda} r^{ \phi } \Gamma(\phi)  \Gamma(\alpha)  \Gamma(\lambda)}  U\left( \lambda + \phi,  \lambda -\alpha +1, \frac{1}{2r} \right) 
\end{equation}
where $U(a,b,z)$ is the Kummer U-function \cite[See Eq. 13.4.4]{NIST}. 
\end{theorem}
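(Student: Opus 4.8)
The plan is to obtain the density of $R=Q/U$ by a change of variables and then to recognise the resulting one-dimensional integral as the Euler-type integral representation of the Kummer $U$ function. It is cleanest to exploit the hierarchical construction behind the $H$ density of Theorem~\ref{gamma conditional gamma}: I would write $U\mid V=v\sim\Gamma(k=\alpha,\theta=2v)$ with $V\sim\Gamma(k=\lambda,\theta=2)$, and take $Q\sim\Gamma(k=\phi,\theta=2)$ independent of $(U,V)$, so that $R=Q/U$ can be built from three independent gamma variates.

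Conditioning on $V=v$, the variables $Q$ and $U$ are independent gammas with scale parameters $2$ and $2v$, so $R\mid v=Q/U$ equals $v^{-1}(X/Y)$ for independent $X\sim\Gamma(k=\phi,\theta=1)$ and $Y\sim\Gamma(k=\alpha,\theta=1)$; since $X/Y$ has density $x^{\phi-1}\big/\{B(\phi,\alpha)(1+x)^{\phi+\alpha}\}$ on $x>0$, the conditional density is $f_{R\mid v}(r)=v^{\phi}r^{\phi-1}\big/\{B(\phi,\alpha)(1+vr)^{\phi+\alpha}\}$. Integrating against the $\Gamma(\lambda,2)$ density of $V$ and substituting $t=rv$ reduces the marginal density of $R$ to a constant times $r^{-(\phi+\lambda)}\int_0^\infty t^{\phi+\lambda-1}(1+t)^{-(\phi+\alpha)}e^{-t/(2r)}\,dt$. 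By the standard representation $U(a,b,z)=\Gamma(a)^{-1}\int_0^\infty e^{-zt}t^{a-1}(1+t)^{b-a-1}\,dt$---applicable here since $\phi+\lambda>0$ and $r>0$---this integral equals $\Gamma(\phi+\lambda)\,U(\phi+\lambda,\lambda-\alpha+1,\tfrac{1}{2r})$, and collecting the gamma factors (using $B(\phi,\alpha)=\Gamma(\phi)\Gamma(\alpha)/\Gamma(\phi+\alpha)$) together with the powers of two yields the stated Kummer $U$ form of $f(r)$.

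An essentially equivalent route avoids the hierarchy: write $f_R(r)=\int_0^\infty g_Q(ru)\,u\,h_U(u)\,du$ with $g_Q$ the $\Gamma(\phi,2)$ density and $h_U$ the density in (\ref{density gamma conditional gamma}); after gathering powers of $u$ this is a constant times $\int_0^\infty u^{c-1}e^{-ru/2}K_{\alpha-\lambda}(\sqrt u)\,du$ with $c=\phi+(\lambda+\alpha)/2$, and one evaluates this Bessel-$K$ Laplace-type integral by expanding $e^{-ru/2}$ in its Maclaurin series and using the Mellin transform $\int_0^\infty v^{s-1}K_\nu(v)\,dv=2^{s-2}\Gamma(\tfrac{s-\nu}{2})\Gamma(\tfrac{s+\nu}{2})$, which produces a ${}_2F_0$ series that is exactly $z^{a}U(a,a-b+1,z)$ at $z=\tfrac{1}{2r}$. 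The main obstacle is this final identification---spotting that the integral left after eliminating the nuisance variable is precisely a Kummer $U$---together with the attendant gamma-function and power-of-two bookkeeping, where a Kummer transformation $U(a,b,z)=z^{1-b}U(a-b+1,2-b,z)$ may be needed to bring the result into the printed parametrisation; the change-of-variables step and the beta-ratio density are routine.
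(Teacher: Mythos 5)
Your main route is essentially the paper's own proof, just reorganized: the paper writes $f(r)=\int_0^\infty u\,f_Q(ru)f_U(u)\,du$, substitutes the hierarchical form of $H(\alpha,\lambda)$, integrates out $u$ first (which is exactly your gamma-ratio/beta-prime step conditional on $V=v$), changes variables $y=rv$, and invokes the Euler integral $U(a,b,z)=\Gamma(a)^{-1}\int_0^\infty e^{-zt}t^{a-1}(1+t)^{b-a-1}dt$ with $a=\lambda+\phi$, $b=\lambda-\alpha+1$, $z=1/(2r)$. So the idea and the key identification are the same.

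There is, however, a genuine problem with your bookkeeping of the power of $r$, and it matters because it touches an error in the printed statement itself. Your conditional density carries a factor $r^{\phi-1}$, but your next display records the marginal as ``a constant times $r^{-(\phi+\lambda)}\int_0^\infty t^{\phi+\lambda-1}(1+t)^{-(\phi+\alpha)}e^{-t/(2r)}dt$,'' which has silently dropped that factor; carried through correctly the prefactor is $r^{\phi-1}\cdot r^{-(\phi+\lambda)}=r^{-(\lambda+1)}$. Your closing claim that collecting constants ``yields the stated Kummer $U$ form'' is therefore not right: the derivation gives
\begin{equation*}
f(r)=\frac{\Gamma(\alpha+\phi)\,\Gamma(\lambda+\phi)}{2^{\lambda}\,r^{\lambda+1}\,\Gamma(\phi)\Gamma(\alpha)\Gamma(\lambda)}\;U\!\left(\lambda+\phi,\ \lambda-\alpha+1,\ \frac{1}{2r}\right),
\end{equation*}
with $r^{-(\lambda+1)}$ in place of the printed $r^{-\phi}$. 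A quick check confirms this: by independence $E[R^{-1}]=E[U]\,E[Q^{-1}]=4\alpha\lambda\cdot\{2(\phi-1)\}^{-1}=2\alpha\lambda/(\phi-1)$, and the $r^{-(\lambda+1)}$ density reproduces exactly this (and integrates to one), while the $r^{-\phi}$ version does not. The paper's own displayed derivation contains the same slip (it mislabels $r^{\phi-1}$ as $r^{\lambda-1}$ and then loses the Jacobian), so the statement as printed is a typo; a correct write-up should derive $r^{-(\lambda+1)}$ and flag the discrepancy rather than assert agreement. One further caution on your alternative route: expanding $e^{-ru/2}$ and integrating the Bessel-$K$ term by term produces the divergent ${}_2F_0$ series, so the interchange of sum and integral is not justified there; the Euler-integral route you lead with is the one that actually constitutes a proof.
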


\begin{proof}[Proof of Theorem~\ref{gamma and beta divided by H}]
The density of the ratio integral and $H(\alpha, \lambda)$ can found by integrating out the following
\begin{equation*}  \begin{split}
f(r)  & =  \int_0^\infty u f_Q( r u) f_U(u) du\\
 & =  \int_0^\infty u f_Q( ru )  \int_0^\infty f_{U|V}( u ) f_{V}( v) du du\\
 & =   \int_0^\infty u  \frac{(ru)^{\phi-1}e^{-ru/2 }}{ 2^{\phi}  \Gamma(\phi)}   \int_{0}^{\infty}  \frac{u^{\alpha-1}e^{-u/(2v) }}{ (2v)^{\alpha}  \Gamma(\alpha)}   \frac{ v^{\lambda-1}e^{-v/2 }}{ 2^{\lambda}  \Gamma(\lambda)} dv du \\
  & =   \frac{2^{-(\lambda+\alpha+\phi)} r^{\lambda-1} }{  \Gamma(\phi)  \Gamma(\alpha)  \Gamma(\lambda)} \int_0^\infty   v^{\lambda -\alpha-1}e^{-v/2 }  \int_{0}^{\infty}  u^{\alpha+\phi-1}e^{-u(1/v +r)/2  }   du dv \\
   & =   \frac{2^{-(\lambda+\alpha+\phi)} r^{\lambda-1} }{  \Gamma(\phi)  \Gamma(\alpha)  \Gamma(\lambda)} \int_0^\infty   v^{\lambda -\alpha-1}e^{-v/2 }  \Gamma( \alpha +\phi ) 2^{\alpha +\phi}\left( 1/v +r \right)^{ -(\alpha +\phi) } dv \\
     & =   \frac{2^{-\lambda} r^{\lambda-1}  \Gamma( \alpha +\phi )  }{  \Gamma(\phi)  \Gamma(\alpha)  \Gamma(\lambda)} \int_0^\infty  e^{-v/2 }  v^{\lambda + \phi -1}  \left( 1 +r v \right)^{ -(\alpha +\phi) } dv. 
\end{split} \end{equation*}
Now performing the change of variables $ \mathbf{Y} = r v$, $v=y/r$ and then $dv = 1/r dy$ we obtain
%\begin{equation*}  \begin{split}
%f(r)  &=   \frac{2^{-\lambda} r^{\lambda-1}  \Gamma( \alpha +\phi )  }{  \Gamma(\phi)  \Gamma(\alpha)  \Gamma(\lambda)} \int_0^\infty  e^{- y/(2r) }  (y/r)^{\lambda + \phi -1}  \left( 1 +r y/r \right)^{ -(\alpha +\phi) } dy \\
%&=   \frac{2^{-\lambda} r^{\lambda-1}  \Gamma( \alpha +\phi )  }{  \Gamma(\phi)  \Gamma(\alpha)  \Gamma(\lambda)} \int_0^\infty  e^{- y/(2r) }  y^{\lambda + \phi -1} r^{-\lambda - \phi +1} \left( 1 + \mathbf{y} \right)^{ -(\alpha +\phi) } dy \\
%&=   \frac{2^{-\lambda} r^{- \phi }  \Gamma( \alpha +\phi )  }{  \Gamma(\phi)  \Gamma(\alpha)  \Gamma(\lambda)} \int_0^\infty  e^{- y/(2r) }  y^{\lambda + \phi -1} \left( 1 + \mathbf{y} \right)^{ -(\alpha +\phi) } dy.
%\end{split} \end{equation*}
\begin{equation*}
 \frac{2^{-\lambda} r^{- \phi }  \Gamma( \alpha +\phi )  }{  \Gamma(\phi)  \Gamma(\alpha)  \Gamma(\lambda)} \int_0^\infty  e^{- y/(2r) }  y^{\lambda + \phi -1} \left( 1 + \mathbf{y} \right)^{ -(\alpha +\phi) } dy.
\end{equation*}
This expression is related to the Kummer U function \cite[See Eq. 13.4.4]{NIST} which the following integral representation
 for $a>0$,
\begin{equation*} %\label{Kummer function}
U(a, b, z) = \frac{1}{\Gamma(a) } \int_0^{\infty} e^{ - z t } t^{a-1} (1+t)^{b-a-1} dt.
\end{equation*}
Applying the Kummer U function integral representation we obtain (\ref{pivotal gamma divided H})
%\begin{equation}
%\begin{split} 
%f(r) &= \frac{2^{-\lambda} r^{- \phi }  \Gamma( \alpha +\phi )  }{  \Gamma(\phi)  \Gamma(\alpha)  \Gamma(\lambda)} \frac{\Gamma(\lambda + \phi ) }{\Gamma( \lambda + \phi) } \int_0^\infty  e^{- y/(2r) }  y^{ (\lambda + \phi) -1} \left( 1 + \mathbf{y} \right)^{ [-(\alpha +\phi) + (\lambda + \phi)  +1] - (\lambda + \phi)  -1 } dy \\
%&= \frac{2^{-\lambda} r^{- \phi }  \Gamma( \alpha +\phi )  }{  \Gamma(\phi)  \Gamma(\alpha)  \Gamma(\lambda)} \Gamma( \lambda + \phi)  U\left( \lambda + \phi,  \lambda -\alpha +1, \frac{1}{2r} \right)  \\
%\end{split} \end{equation}
\end{proof}

\begin{theorem}
\label{gamma divided by H and U}
If  $ X \sim \Gamma( k= \phi, \theta=2 )$, $V \sim BETA(\kappa,\beta)$ and $U \sim H(\alpha, \lambda)$ follows the density from (\ref{density gamma conditional gamma}) then the distribution of  $R=Q/(VU)$ is 
%\begin{equation*}  
%f(r)   = \frac{\Gamma(\kappa+\boldsymbol{\beta})}{\Gamma(\kappa)\Gamma(\boldsymbol{\beta})}     \frac{2^{-\lambda}   \Gamma( \alpha +\phi )  }{  \Gamma(\phi)  \Gamma(\alpha)  \Gamma(\lambda)} \Gamma( \lambda + \phi)   \int_0^\infty u (ru)^{- \phi } U\left( \lambda + \phi,  \lambda -\alpha +1, \frac{1}{2ru} \right)   u^{\kappa-1} (1-u)^{\boldsymbol{\beta}-1}  du
%\end{equation*}
\begin{equation*}  
f(r)   = C(\kappa, \beta, \alpha, \phi )  r^{-\phi}  \int_0^\infty U\left( \lambda + \phi,  \lambda -\alpha +1, \frac{1}{2ru} \right)   u^{\kappa-\phi} (1-u)^{\boldsymbol{\beta}-1}  du
\end{equation*}
where 
\begin{equation*} 
C(\kappa, \beta, \alpha, \phi) = \frac{\Gamma(\kappa+\beta)}{\Gamma(\kappa)\Gamma(\beta)}     \frac{2^{-\lambda}   \Gamma( \alpha +\phi )  }{  \Gamma(\phi)  \Gamma(\alpha)  \Gamma(\lambda)} \Gamma( \lambda + \phi)
 \end{equation*}
\end{theorem}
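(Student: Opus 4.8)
The plan is to build on Theorem~\ref{gamma and beta divided by H} rather than integrate out all three variables simultaneously. Write $R = Q/(VU) = (Q/U)/V$ and set $W = Q/U$. Since $Q$, $V$ and $U$ are mutually independent, $W$ and $V$ are independent, and Theorem~\ref{gamma and beta divided by H} already gives the density of $W$ in closed form as the Kummer-$U$ expression $f_W(w) = \{\Gamma(\alpha+\phi)\Gamma(\lambda+\phi)\}/\{2^{\lambda} w^{\phi}\,\Gamma(\phi)\Gamma(\alpha)\Gamma(\lambda)\}\, U(\lambda+\phi,\lambda-\alpha+1,1/(2w))$ for $w>0$. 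So the theorem reduces to finding the density of the ratio of $W$ and an independent $\mathrm{BETA}(\kappa,\beta)$ variable.

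Next I would apply the standard transformation for the ratio of two independent positive random variables. Fixing $V=v$, the map $w\mapsto r=w/v$ has Jacobian $v$, so $f_{R\mid V}(r\mid v) = v\, f_W(rv)$, and marginalizing over $v\in(0,1)$ (the support of $V$) gives $f_R(r) = \int_0^1 v\, f_W(rv)\, f_V(v)\, dv$ with $f_V(v) = v^{\kappa-1}(1-v)^{\beta-1}/B(\kappa,\beta)$. Substituting the density of $W$, the factor $(rv)^{-\phi}$ from $f_W$ combines with $v\cdot v^{\kappa-1}$ to give $v^{\kappa-\phi}$, and $r^{-\phi}$ together with all the Gamma and Beta constants pull outside the integral; using $B(\kappa,\beta)^{-1} = \Gamma(\kappa+\beta)/\{\Gamma(\kappa)\Gamma(\beta)\}$ then produces the constant $C(\kappa,\beta,\alpha,\phi)$ exactly as stated, and what remains inside is $\int v^{\kappa-\phi}(1-v)^{\beta-1}\, U(\lambda+\phi,\lambda-\alpha+1,1/(2rv))\, dv$. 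This last integral does not collapse to a standard special function, so the statement simply keeps it in integral form (with $u$ as the dummy variable).

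There is no genuine difficulty here; the points requiring care are (i) not dropping the Jacobian factor $v$ in the ratio transformation, since it is precisely what turns the exponent $\kappa-1-\phi$ into $\kappa-\phi$, and (ii) tracking that the integration range is $(0,1)$ inherited from $V$. One could instead derive the result from first principles by also conditioning $U$ on its latent $\Gamma(\lambda,2)$ scale, which would yield a triple integral reproducing the computation inside the proof of Theorem~\ref{gamma and beta divided by H}; routing through that theorem avoids repeating it. I would also briefly note independence of $Q$, $V$, $U$ is what licenses both the product density and the reuse of Theorem~\ref{gamma and beta divided by H}.
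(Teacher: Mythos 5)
Your proof is correct and takes essentially the same route as the paper, which likewise sets $Y=Q/U$, invokes the density from Theorem~\ref{gamma and beta divided by H}, and then forms the ratio with the independent Beta variable exactly as you describe. Your remark that the integral should run over $(0,1)$ (the support of $V$) rather than $(0,\infty)$ correctly identifies a typo in the stated limits.
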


\begin{proof}[Proof of Theorem~\ref{gamma divided by H and U}]
Let $Y = X/U$ then density of $Y$ is in (\ref{pivotal gamma divided H}). Then the density of the ratio $Y/Z$ is obtained similar to the proof of theorem \ref{gamma and beta divided by H}.
% \begin{equation}  \begin{split}
%f(r)  & =  \int_0^1 u f_X( r u) f_U(u) du\\
% & = \frac{\Gamma(\kappa+\boldsymbol{\beta})}{\Gamma(\kappa)\Gamma(\boldsymbol{\beta})}   \int_0^\infty u f_X( ru )   u^{\kappa-1} (1-u)^{\boldsymbol{\beta}-1} du\\
%  & =C   \int_0^\infty u (ru)^{- \phi } U\left( \lambda + \phi,  \lambda -\alpha +1, \frac{1}{2ru} \right)   u^{\kappa-1} (1-u)^{\boldsymbol{\beta}-1}  du\\
%\end{split} \end{equation}
\end{proof}

\section{Partial Sketching Proofs} \label{app2}

\begin{proof}[Proof of equation (\ref{sigma partial estimator})]
Begin by expanding the quadratic form into three pieces, 
\begin{equation*}\begin{split}
& \left(\mathbf{y} -  \mathbf{X} \boldsymbol{\beta}_p \right)^{\mathsf{T}}  \left(\mathbf{y} -  \mathbf{X} \boldsymbol{\beta}_p \right)  \\
%&= \mathbf{y}^{\mathsf{T}} \mathbf{y} -\mathbf{y}^{\mathsf{T}}  \mathbf{X} \boldsymbol{\beta}_p -\boldsymbol{\beta}_p^{\mathsf{T}}  \mathbf{X}^{\mathsf{T}}  \mathbf{y} + \boldsymbol{\beta}_p^{\mathsf{T}}  \mathbf{X}^{\mathsf{T}} \mathbf{X} \boldsymbol{\beta}_p \\
=\; & \mathbf{y}^{\mathsf{T}} \mathbf{y} - 2  \gamma \times \mathbf{y}^{\mathsf{T}}  \mathbf{X} (\mathbf{X}_{s}^{\mathsf{T}}  \mathbf{X}_s)^{-1} \mathbf{X}^{\mathsf{T}} \mathbf{y} + \gamma^2 \mathbf{y}^{\mathsf{T}} \mathbf{X}  (\mathbf{X}_{s}^{\mathsf{T}}  \mathbf{X}_s)^{-1} \mathbf{X}^{\mathsf{T}} \mathbf{X} (\mathbf{X}_{s}^{\mathsf{T}}  \mathbf{X}_s)^{-1} \mathbf{X}^{\mathsf{T}} \mathbf{y}. \\
\end{split}\end{equation*}
The first term is fixed and the expected value the middle is term is
\begin{equation*}  
E[\mathbf{y}^{\mathsf{T}}  \mathbf{X} \boldsymbol{\beta}_p  ] =  \mathbf{y}^{\mathsf{T}} \mathbf{X} \boldsymbol{\beta}_F = \mathbf{y}^{\mathsf{T}} \mathbf{X} \left( \mathbf{X}^{\mathsf{T}}  \mathbf{X} \right)^{-1} \mathbf{X}^{\mathsf{T}} \mathbf{y} = SSM_F.
\end{equation*}
The expected of the last term can be obtained using the $vec$ operator \begin{equation*}
%\mathbf{y}^{\mathsf{T}} \mathbf{X}  (\mathbf{X}_{s}^{\mathsf{T}} \mathbf{X}_s)^{-1} \mathbf{X}^{\mathsf{T}} \mathbf{X} (\mathbf{X}_{s}^{\mathsf{T}} \mathbf{X}_s)^{-1} \mathbf{X}^{\mathsf{T}} \mathbf{y} 
\gamma^2 \left\{ vec  (\mathbf{X}_{s}^{\mathsf{T}}  \mathbf{X}_s)^{-1} \right\}^{\mathsf{T}}  \left( \mathbf{X}^{\mathsf{T}} \mathbf{y} \mathbf{y}^{\mathsf{T}} \mathbf{X}   \otimes  \mathbf{X}^{\mathsf{T}} \mathbf{X} \right)   vec (\mathbf{X}_{s}^{\mathsf{T}} \mathbf{X}_s)^{-1}.
\end{equation*}
Then expected value can be written as
\begin{equation*} 
\begin{split}
&\gamma^2 tr \left[ var\left\{ vec (\mathbf{X}_{s}^{\mathsf{T}} \mathbf{X}_s)^{-1} \right\}  \left( \mathbf{X}^{\mathsf{T}} \mathbf{y} \mathbf{y}^{\mathsf{T}} \mathbf{X}   \otimes  \mathbf{X}^{\mathsf{T}} \mathbf{X} \right)  \right] \\
&+  \gamma^2  \mathbf{y}^{\mathsf{T}} \mathbf{X}  E\left\{ (\mathbf{X}_{s}^{\mathsf{T}} \mathbf{X}_s)^{-1} \right\} \mathbf{X}^{\mathsf{T}} \mathbf{X} E\left\{ (\mathbf{X}_{s}^{\mathsf{T}} \mathbf{X}_s)^{-1} \right\} \mathbf{X}^{\mathsf{T}} \mathbf{y}
\end{split}
\end{equation*}
and the last term is 
\begin{equation*}
\gamma^2 \mathbf{y}^{\mathsf{T}} E\left\{ (\mathbf{X}_{s}^{\mathsf{T}} \mathbf{X}_s)^{-1} \right\} \mathbf{X}^{\mathsf{T}} \mathbf{X} E\left\{ (\mathbf{X}_{s}^{\mathsf{T}} \mathbf{X}_s)^{-1} \right\} \mathbf{X}^{\mathsf{T}} \mathbf{y} = SSM_F.
\end{equation*}
For the expression involves the variance of the inverse wishart, if $ (\mathbf{X}_{s}^{\mathsf{T}} \mathbf{X}_s)^{-1} \sim W_p^{-1} \{ n-p-1, k (\mathbf{X}^{\mathsf{T}} \mathbf{X})^{-1} \} $ then $var \{ vec (\mathbf{X}^{\mathsf{T}} \mathbf{X})^{-1} \} $ is 
%\begin{equation*}
%k^2 \frac{ vec (\mathbf{X}^{\mathsf{T}} \mathbf{X})^{-1} \left\{ vec (\mathbf{X}^{\mathsf{T}} \mathbf{X})^{-1} \right\}^{\mathsf{T}} + (n-p-1) \left( I_{n^2} +  K_{nn}\right) \left\{  (\mathbf{X}^{\mathsf{T}} \mathbf{X})^{-1} \otimes  (\mathbf{X}^{\mathsf{T}} \mathbf{X})^{-1}  \right\} }{ \left( n -p \right) \left( n - p -1 \right)^2 \left( n -p -3 \right)}
%\end{equation*}
\begin{equation*}
\begin{split}
\zeta(k,n,p) \times & \left\{ vec (\mathbf{X}^{\mathsf{T}} \mathbf{X})^{-1} \left[ vec (\mathbf{X}^{\mathsf{T}} \mathbf{X})^{-1} \right]^{\mathsf{T}} \right. \\
& \left. + (n-p-1) \left( I_{n^2} +  K_{nn}\right) \left[  (\mathbf{X}^{\mathsf{T}} \mathbf{X})^{-1} \otimes  (\mathbf{X}^{\mathsf{T}} \mathbf{X})^{-1} \right] \right\}
\end{split}
\end{equation*}
where
\begin{equation*}
\zeta(k,n,p) = \frac{k^2}{ \left( n -p \right) \left( n - p -1 \right)^2 \left( n -p -3 \right)}
\end{equation*}
and $K_{nn}$ is a commutation matrix. The commutation matrix, $K_{mn}$, is a unique $mn \times mn$ permutation matrix which gives an relation between the vec of a matrix and its transposed. i.e. if   $A$ is $m \times n$ matrix then  $K_{mn} vec A =vec A^{\mathsf{T}} $. See \cite{MagnusJanR1988Mdcw} for further properties on the  commutation matrix.  More some properties for inverse Wishart distribution are described within \cite{Gupta2000}. 

Then the first term of the numerator of the variance becomes
 \begin{equation*}
\gamma^2 k^2 tr\left[ vec (\mathbf{X}^{\mathsf{T}} \mathbf{X})^{-1} \left\{ vec (\mathbf{X}^{\mathsf{T}} \mathbf{X})^{-1} \right\}^{\mathsf{T}} (\mathbf{X}^{\mathsf{T}} \mathbf{y} \mathbf{y}^{\mathsf{T}} \mathbf{X})   \otimes  (\mathbf{X}^{\mathsf{T}}  \mathbf{X}) \right] 
=\gamma^2 k^2 SSM_F.
\end{equation*}
Then the second term of the numerator of the variance becomes
 \begin{equation*}
 \begin{split}
&\gamma^2 k^2 tr\left[  (n-p-1) \left( I_{n^2} +  K_{nn}\right) \left( (\mathbf{X}^{\mathsf{T}}  \mathbf{X})^{-1}  \otimes (\mathbf{X}^{\mathsf{T}}  \mathbf{X} )^{-1}   \right) \right]  \\
& =\gamma^2 k^2 (n-p-1) (p+1) SSM_F.
\end{split}
\end{equation*}
Combining these results gives the desired expression.
\end{proof}

\begin{proof}[Proof of approximate density (\ref{approximate density unconditional})]
We begin with letting $\eta = \gamma \sigma^2$
%\begin{equation*}
%\eta = \gamma \sigma^2, 
%\quad
%\boldsymbol{\lambda} = \gamma  \left( \mathbf{X}^{\mathsf{T}} \mathbf{X} \right)^{+1/2} \boldsymbol{\beta}_0 
%\end{equation*}
and
\begin{equation*}
\mathbf{X}_{s}^{\mathsf{T}} \mathbf{X}_s = \left( \mathbf{X}^{\mathsf{T}} \mathbf{X} \right)^{1/2}  \mathbf{B}  \left( \mathbf{X}^{\mathsf{T}} \mathbf{X} \right)^{1/2} 
\end{equation*}
where $\mathbf{B}  \sim W_p(  \frac{1}{k} I_p, k )$. Then we have that
% \begin{equation}
%\boldsymbol{\beta}_p | \mathbf{B} \stackrel{approx}{\sim}\mathcal{N} \left(  \left( \mathbf{X}^{\mathsf{T}} \mathbf{X} \right)^{-1/2}  B^{-1}   \left( \mathbf{X}^{\mathsf{T}} \mathbf{X} \right)^{-1/2} \mathbf{X}^{\mathsf{T}}  \mathbf{X} \boldsymbol{\beta}_s, \eta \times \left( \mathbf{X}^{\mathsf{T}} \mathbf{X} \right)^{-1/2}  B^{-1}   \left( \mathbf{X}^{\mathsf{T}} \mathbf{X} \right)^{-1/2} ( \mathbf{X}^{\mathsf{T}} \mathbf{X} )  \left( \mathbf{X}^{\mathsf{T}} \mathbf{X} \right)^{-1/2}  B^{-1}   \left( \mathbf{X}^{\mathsf{T}} \mathbf{X} \right)^{-1/2} \right)
%\end{equation}
 \begin{equation*}
 \begin{split}
%\boldsymbol{\beta}_p | \mathbf{B}  \stackrel{approx}{\sim}
\boldsymbol{\beta}_p | \mathbf{B}  \sim & \mathcal{N} \left( \gamma \left( \mathbf{X}^{\mathsf{T}} \mathbf{X} \right)^{-1/2}  \mathbf{B} ^{-1}   \left( \mathbf{X}^{\mathsf{T}} \mathbf{X} \right)^{+1/2} \boldsymbol{\beta}_0,  \right. \\
& \left. \eta \gamma \times \left( \mathbf{X}^{\mathsf{T}} \mathbf{X} \right)^{-1/2}  \mathbf{B} ^{-1}     \mathbf{B} ^{-1}   \left( \mathbf{X}^{\mathsf{T}} \mathbf{X} \right)^{-1/2} \right).
\end{split}
\end{equation*}
Then we let $\boldsymbol{\lambda} = \gamma  \left( \mathbf{X}^{\mathsf{T}} \mathbf{X} \right)^{+1/2} \boldsymbol{\beta}_0$ and perform the transformation of variables  $\mathbf{z}= \left( \mathbf{X}^{\mathsf{T}} \mathbf{X} \right)^{1/2} \boldsymbol{\beta}_p$ to obtain $\mathbf{z} | \mathbf{B}  \sim N ( \mathbf{B} ^{-1} \boldsymbol{\lambda}, \eta \mathbf{B} ^{-1}\mathbf{B} ^{-1} )$. However, integrating out $B$ does not yield a closed form density instead we approximate the stochastic relationship by noting that $E[\gamma \mathbf{B} ^{-1}] = I $ and apply the following approximation $\gamma \mathbf{B} ^{-1} \mathbf{B} ^{-1}  = \mathbf{B} ^{-1}$. This is not an optimal approximation but allows for a closed form density to be found and it performs well under simulations. 

Applying the approximation, we now tasked with deriving the marginal density for 
$ \mathbf{z} | \mathbf{B}  \sim N ( \mathbf{B} ^{-1} \boldsymbol{\lambda}, \eta \mathbf{B} ^{-1} )$
which has density 
\begin{equation*}
f(  \mathbf{z} | \mathbf{B} ) = \frac{ |  \mathbf{B}  |^{1/2}  }{ (2\pi  \eta)^{p/2}  }  \exp \left\{-\frac{1}{2 \eta } tr \left(  \mathbf{B}      \mathbf{z} \mathbf{z}^{\mathsf{T}}  -  2  \boldsymbol{\lambda} z^{\mathsf{T}}   +  \mathbf{B} ^{-1} \boldsymbol{\lambda} \boldsymbol{\lambda}^{\mathsf{T}}    \right) \right\}
\end{equation*}
and the density for $B$ is
 \begin{equation*}
f ( \mathbf{B}  ) 
%= \frac{1}{2^{kp/2}  \Gamma_p\left(\frac {k}{2}\right ) } \frac{\left|B\right|^{(k-p-1)/2} }{ \left|  \frac{1}{k}  I_p   \right|^{k/2}   } \exp \left\{ - \frac{1}{2} tr \left[ \left( \frac{1}{k}  I_p \right)^{-1}B \right] \right\}
= \frac{k^{pk/2}    \left|\mathbf{B} \right|^{(k-p-1)/2   }}{2^{kp/2}  \Gamma_p\left(\frac {k}{2}\right ) }  \exp \left\{ - \frac{k}{2} tr \left( \mathbf{B}  \right) \right\}.
\end{equation*}
The joint density is 
% \begin{equation*}
% f( \mathbf{z}, \mathbf{B} ) = \frac{ k^{pk/2} | \mathbf{B}  |^{(k-p)/2 }  }{ (2\pi)^{p/2}  }  \exp\left\{-\frac{1}{2 } tr \left(  \mathbf{B}      \mathbf{z} \mathbf{z}^{\mathsf{T}}  -  2 \boldsymbol{\lambda} z^{\mathsf{T}}   +B^{-1} \boldsymbol{\lambda} \boldsymbol{\lambda}^{\mathsf{T}}  + k  \mathbf{B}   \right)  \right\}
% \frac{ 1 }{2^{kp/2}  \Gamma_p\left(\frac {k}{2}\right ) } .
%\end{equation*}
 \begin{equation*}
 f( \mathbf{z}, \mathbf{B} ) = \frac{ k^{pk/2} | \mathbf{B}  |^{(k-p)/2 }  }{ (2\pi)^{p/2} 2^{kp/2}  \Gamma_p\left(\frac {k}{2}\right )  }  \exp\left\{-\frac{1}{2 } tr \left(  \mathbf{B}      \mathbf{z} \mathbf{z}^{\mathsf{T}}  -  2 \boldsymbol{\lambda} z^{\mathsf{T}}   +B^{-1} \boldsymbol{\lambda} \boldsymbol{\lambda}^{\mathsf{T}}  + k  \mathbf{B}   \right)  \right\} .
\end{equation*}
We then integrate out $B$ from the joint density
\begin{equation*}
\begin{split}
& e^{ \mathbf{z}^{\mathsf{T}}   \boldsymbol{\lambda}/\eta} \frac{(2\pi \eta)^{-p/2}   }{ 2^{kp/2} \Gamma_p\left(\frac {n}{2}\right ) } \\
& \times \int 
 |  \mathbf{B} |^{(k+1)/2-(p+1)/2 } \exp\left[-\frac{1}{2} tr \left\{  \mathbf{B}  \left(   \mathbf{z}  \mathbf{z}^{\mathsf{T}} /\eta + k I \right)+ B^{-1} \boldsymbol{\lambda} \boldsymbol{\lambda}^{\mathsf{T}}  /\eta \right\} \right]
 d W  
\end{split}
\end{equation*}
the resulting integrate is related to the Bessel function of matrix argument. 
We write it in standard form by letting $\mathbf{A} =   \mathbf{z}  \mathbf{z}^{\mathsf{T}} /\eta + k \mathbf{I} $ and $a =\boldsymbol{\lambda} \eta^{-1/2}$ then we have
\begin{equation*} 
e^{ \mathbf{z}^{\mathsf{T}} \boldsymbol{\lambda}/\eta } \frac{(2\pi \eta)^{-p/2}  }{ 2^{kp/2}  \Gamma_p\left(\frac {k}{2}\right ) } \int 
 |  \mathbf{W}  |^{ (k+1)/2 -(p+1)/2 } \exp\left\{-\frac{1}{2} tr \left(    \mathbf{W}  \mathbf{A}+  \mathbf{W}^{-1} \mathbf{a}  \mathbf{a}^{\mathsf{T}}  \right) \right\}
 d W .
\end{equation*}
Applying the Bessel function of matrix argument yields the following
\begin{equation*}
e^{ \mathbf{z}^{\mathsf{T}} \boldsymbol{\lambda}/\eta } \frac{(2\pi \eta)^{-p/2}  }{ 2^{kp/2}  \Gamma_p\left(\frac {k}{2}\right ) } 
B_{-(k+1)/2}^{(p)} \left( \frac{1}{4} \mathbf{A} \mathbf{a} \mathbf{a}^{\mathsf{T}}   \right)  \left| \frac{1}{2} \mathbf{A} \right|^{-(k+1)/2}
\end{equation*}
where for the $B_{\nu}^{(p)}$ matrix Bessel function for dimension $p$,   \citep[See][p. 506]{herz1955}. Then from \citep[][p. 509]{herz1955}, if $V$ is a $p\times p$ matrix with rank $m$ then with $-\delta > (p-m-1)/2$, 
 \begin{equation*}
B_{-\delta}^{(p)}( \mathbf{V}  ) =  \frac{\Gamma_p(\delta) }{ \Gamma_m \left(\delta - \frac{p-1}{2} \right)} B_{-\delta + (p-m)/2 }^{(m)} (  \mathbf{V}   ) 
\end{equation*}
where $B_{-\nu + (p-m)/2 }^{(m)}$ depends on the $m$ eigenvalues of $V$. Using this connection formula we can obtain the following relation 
\begin{equation*} 
B_{-(k+1)/2}^{(p)}(  \mathbf{A}   \mathbf{a}   \mathbf{a}^{\mathsf{T}}   ) =  \frac{\Gamma_p \left( \frac{k+1}{2}\right) }{ \Gamma\left ( \frac{k+1}{2} - \frac{p-1}{2} \right)} B_{-(k+1)/2 + (p-1)/2 }^{(1)} (  \mathbf{a}^{\mathsf{T}}  \mathbf{A}  \mathbf{a}  ) 
\end{equation*}
and then using the relation which relation the matrix Bessel function to the Modified Bessel function of the second  
 \begin{equation*}
%K_\nu(z) = \frac{1}{2} B_\nu^{(1)} \left( \frac{z^2}{4} \right) \left( \frac{z}{2} \right)^\nu 
%\qquad \mbox{and} \qquad
B_\nu^{(1)}  \left( \frac{y}{4} \right)= y^{-\nu/2} \; 2^{\nu+1} K_\nu( \sqrt{y} ). 
 \end{equation*}
Combing the last two equations we get the following connection formula
 \begin{equation*} 
 \begin{split}
&B_{-(k+1)/2}^{(p)} \left( \frac{1}{4}  \mathbf{A}   \mathbf{a}   \mathbf{a}^{\mathsf{T}}   \right) \\
&= \frac{\Gamma_p\left( \frac{k+1}{2} \right) 2^{ -(k+1)/2 + (p-1)/2 +1}  }{ \Gamma\left( \frac{k-p}{2} +1 \right)} ( \mathbf{a}^{\mathsf{T}}  \mathbf{A}  \mathbf{a})^{(k- p+2)/4} \; K_{-(k-p)/2}\left\{  \left(\mathbf{a}^{\mathsf{T}} \mathbf{A} \mathbf{a}\right)^{1/2} \right\}. 
\end{split}
 \end{equation*}
Applying this connection formula, the density now becomes
\begin{equation}  \label{almost b density}
 \begin{split}
&e^{ \mathbf{z}^{\mathsf{T}} \boldsymbol{\lambda}/\sigma } \frac{  2^{ -(k-p)/2}   }{ \Gamma_p\left(\frac {k}{2}\right ) (\pi \eta)^{p/2} } 
 \left|  \mathbf{A} \right|^{-(k+1)/2}  \\ 
& \times
\frac{\Gamma_p\left( \frac{k+1}{2} \right) }{ \Gamma\left( \frac{k-p}{2} +1  \right)} (\mathbf{a}^{\mathsf{T}} \mathbf{A} \mathbf{a})^{-(k+1)/4} \;  K_{(p-k)/2-1}\left\{  \left(\mathbf{a}^{\mathsf{T}} \mathbf{A} \mathbf{a}\right)^{1/2} \right\} .
\end{split}
\end{equation}
We now substitute $\mathbf{A}$ and $\mathbf{a}^{\mathsf{T}} \mathbf{A}\mathbf{a}$. For the term $ \left|   \mathbf{A} \right|$ we apply the matrix determinant lemma 
\begin{equation*}
 \left|  \mathbf{A} \right|
 =   \left|  \mathbf{z} \mathbf{a}^{\mathsf{T}} /\eta + k \mathbf{I} \right|
 =  \left\{ 1 +  \mathbf{z}^{\mathsf{T}} \mathbf{z} /(k\eta) \right\}  \left|   k  \mathbf{I} \right|
 =  \left\{ 1 + \mathbf{z}^{\mathsf{T}} \mathbf{z} /(k\eta) \right\}  k^{p}.
\end{equation*}
The term $a^{\mathsf{T}} \mathbf{A} \mathbf{a}$ is
\begin{equation*} 
a^{\mathsf{T}} \mathbf{A} \mathbf{a}
=  \left( \mathbf{z}^{\mathsf{T}}  \boldsymbol{\lambda}  \right)^2 /\eta^2+ k \boldsymbol{\lambda}^{\mathsf{T}}  \boldsymbol{\lambda}/\eta
\end{equation*}
To simplify the multivariate gamma we have the relation 
\begin{equation*} 
\frac{\Gamma_p \left(  \frac{k+1}{2} \right) }{ \Gamma_p\left(\frac {k}{2}\right ) }
= \frac{\Gamma \left( \frac{k+1}{2} \right) }{ \Gamma \left(\frac {k+1-p}{2}\right )  }.
\end{equation*}
Now applying the last three equations to the density (\ref{almost b density}) we get 
%\begin{equation} 
% \frac{  2^{ (p-k)/2}  \Gamma_p\left( \frac{k+1}{2} \right) k^{-p(k+1)/2}  }{   (\pi \eta)^{p/2} \Gamma_p\left(\frac {k}{2}\right )  \Gamma\left( \frac{k-p}{2} +1\right) } 
% \frac{ e^{ \mathbf{z}^{\mathsf{T}}/\eta}   K_{(p-k-2)/2}\left[ \left\{ \left( z^{\mathsf{T}}  \boldsymbol{\lambda} \right)^2/\eta^2 + \boldsymbol{\lambda}^{\mathsf{T}} \boldsymbol{\lambda} /\eta \right\}^{1/2} \right]  }{ 
%  \left[ 1 + \frac{ \mathbf{z}^{\mathsf{T}} \mathbf{z} }{k \eta }\right]^{(k+1)/2} 
%\left[ \left( z^{\mathsf{T}}  \boldsymbol{\lambda}\right)^2/\eta^2 +k \boldsymbol{\lambda}^{\mathsf{T}}  \boldsymbol{\lambda} /\eta \right]^{(p-k-2)/4}  } 
%\end{equation}
\begin{equation*} 
 \frac{  2^{ (p-k)/2} k^{p(k+1)/2}   \Gamma\left( \frac{k+1}{2} \right)  }{ (\pi \eta )^{p/2}    \Gamma\left(\frac {k-p+1}{2}\right )  \Gamma\left( \frac{k-p}{2} +1\right) } 
 \frac{   e^{   \mathbf{z}^{\mathsf{T}}  \boldsymbol{\lambda}/\eta}    K_{(p-k-2)/2}\left[ \left\{  \left( z^{\mathsf{T}}  \boldsymbol{\lambda}\right)^2 / \eta^2  +k   \boldsymbol{\lambda}^{\mathsf{T}}  \boldsymbol{\lambda} / \eta \right\}^{1/2} \right]  }{ 
   \left[ 1 + \frac{ \mathbf{z}^{\mathsf{T}}   \mathbf{z} }{k \eta }\right]^{(k+1)/2}
\left[ \frac{ \left( \mathbf{z}^{\mathsf{T}}   \boldsymbol{\lambda}\right)^2 }{\eta^2} + k\frac{ \boldsymbol{\lambda}^{\mathsf{T}}  \boldsymbol{\lambda}  }{\eta} \right]^{(p-k-2)/4}  } . 
\end{equation*}
Finally, performing the transformation of variables and  substitution we obtain an approximate density for the partial sketch in (\ref{approximate density unconditional}). 
\end{proof}

%{ \color{blue} useful identities - REMOVE } % $z = \left( \mathbf{X}^{\mathsf{T}} \mathbf{X} \right)^{1/2} \boldsymbol{\beta}_p$
%\begin{equation}
%\eta = \gamma^2 \sigma^2, 
%\quad
%\boldsymbol{\lambda} =  \gamma \left( \mathbf{X}^{\mathsf{T}} \mathbf{X} \right)^{+1/2} \boldsymbol{\beta}
%\quad \mbox{and} \quad
%\mathbf{X}_{s}^{\mathsf{T}} \mathbf{X}_s = \left( \mathbf{X}^{\mathsf{T}} \mathbf{X} \right)^{1/2}  \mathbf{B}  \left( \mathbf{X}^{\mathsf{T}} \mathbf{X} \right)^{1/2} 
%\end{equation}
%\begin{equation}
%\mathbf{z}^{\mathsf{T}} \mathbf{z} = b^{\mathsf{T}}   \left( \mathbf{X}^{\mathsf{T}} \mathbf{X} \right) b,
%\qquad 
%z^{\mathsf{T}} \boldsymbol{\lambda} = \gamma b^{\mathsf{T}}   \left( \mathbf{X}^{\mathsf{T}} \mathbf{X} \right) \boldsymbol{\beta} ,
%\qquad 
%\mbox{and}
%\qquad 
%\boldsymbol{\lambda}^{\mathsf{T}} \boldsymbol{\lambda} = \gamma^2 \boldsymbol{\beta}^{\mathsf{T}}   \left( \mathbf{X}^{\mathsf{T}} \mathbf{X} \right) \boldsymbol{\beta}
%\end{equation}
%{ \color{blue} END REMOVE } 

\section{Supplementary Summary Plots for Simulation} \label{app3}

\begin{figure}[!htbp]
\includegraphics[width=5.5in, height=2in]{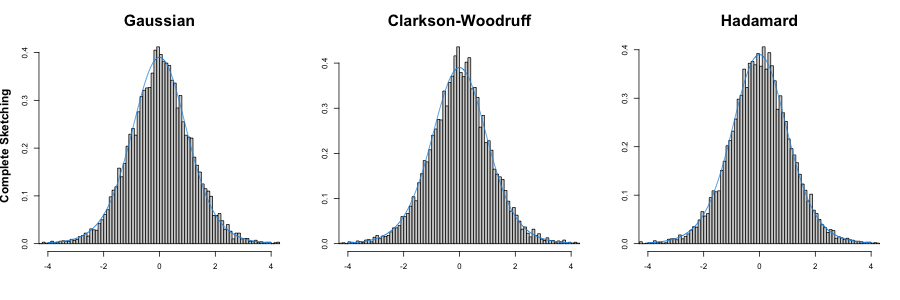}
\includegraphics[width=5.5in, height=2in]{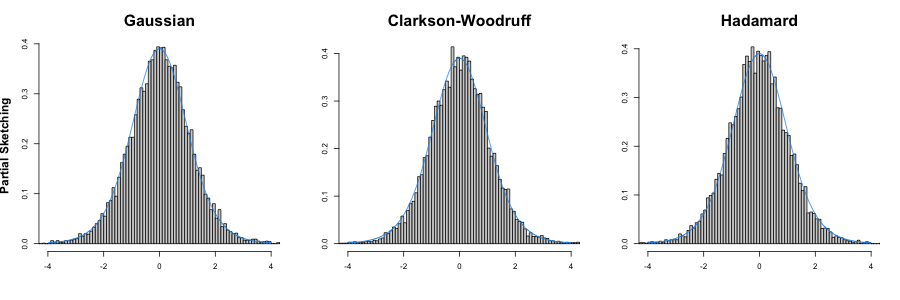}
\centering
\caption{Histograms under repeated sketches of the simulated pivotal quantity for testing if $\beta_{F6}=0$ when performing complete sketching (top row) and when performing partial sketching (bottom row) while varying the type of sketch.  The Gaussian sketch is exact and the other two are approximate. The complete and partial sketch test are based on (\ref{marginal complete test}) and (\ref{partial beta equal to zero}), respectively.}
\label{fig:testingBeta6}
\end{figure}

\begin{figure}[!htbp]
\includegraphics[width=5.5in, height=2in]{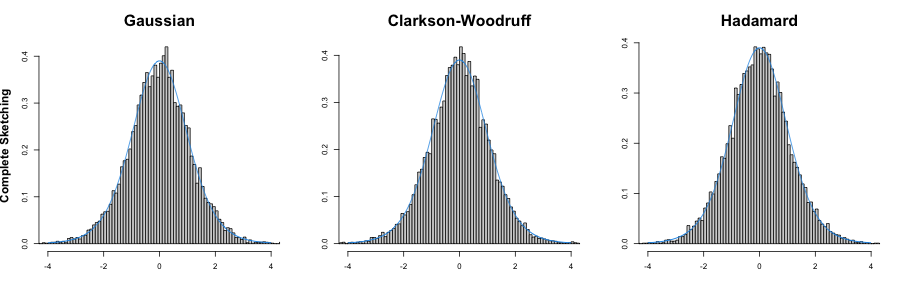}
\includegraphics[width=5.5in, height=2in]{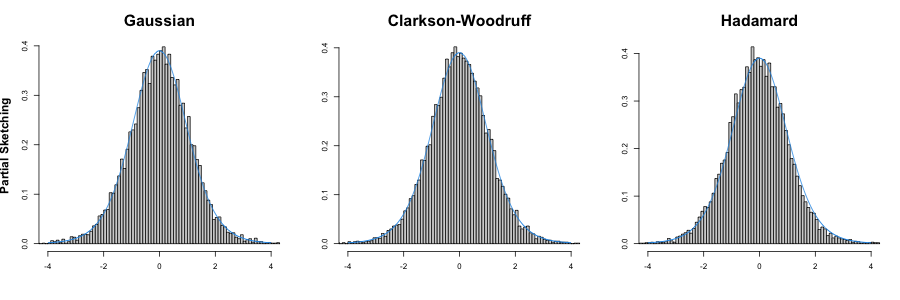}
\centering
\caption{Histograms under repeated samples of the simulated pivotal quantity for testing if $\beta_{6}=0$ when performing complete sketching (top row) and when performing partial sketching (bottom row) while varying the type of sketch.  The Gaussian sketch is exact and the other two are approximate. The complete and partial sketch test are based on (\ref{marginal complete test samples}) and (\ref{partial beta test samples}), respectively. }
\label{fig:testingBeta6samples}
\end{figure}

\bibliography{sketch}
\bibliographystyle{chicago}

\end{document}